%% file: condmat-v4.tex
\documentclass[aps,prl,twocolumn,superscriptaddress]{revtex4-2}

\usepackage{amsthm}
\usepackage{amsmath,amssymb,braket}
\usepackage{amsfonts}
\usepackage{graphicx}
\usepackage{color}
\usepackage{bm}
\usepackage[dvipsnames]{xcolor}
\usepackage{graphicx}
\usepackage{hyperref} 
\usepackage{cleveref} 
\newcommand{\sectionprl}[1]{{\par\it #1.---}}

\newtheorem{lemma}{Lemma}
\usepackage{hyperref} 
\usepackage{cleveref}
\usepackage{tikz}
\usepackage{tikz-3dplot}

\usepackage[normalem]{ulem}
\usepackage{xcolor}

\begin{document}

\title{Universal criticality of entropy production in chemical reaction networks}
\author{Kyota Tamano}
\email{tamano.kyota.45w@st.kyoto-u.ac.jp}
\affiliation{Department of Physics, Kyoto University, Kyoto 606-8502, Japan}
\author{Keiji Saito}
\email{keiji.saitoh@scphys.kyoto-u.ac.jp}
\affiliation{Department of Physics, Kyoto University, Kyoto 606-8502, Japan}

\date{\today}

\begin{abstract}
Stochastic thermodynamics gives universal relations for microscopic entropy production, yet its critical behavior at macroscopic nonequilibrium transitions remains unclassified. We study well-mixed reversible chemical reaction networks in the macroscopic-first limit, where transitions arise as local bifurcations of mass-action dynamics. Using linear-noise formulas, center-manifold normal forms, and Floquet theory, we obtain generic exponents for entropy-production fluctuations and responses at pitchfork, transcritical, saddle-node, and Hopf bifurcations. Beyond this low-order classification, a trajectory-space Cram\'{e}r–Rao type bound yields the universal scaling inequality $\alpha - 2 \beta \ge 0$. Hence divergent responses require divergent fluctuations, but not conversely, making entropy-production fluctuations a sharper probe of nonequilibrium criticality.
\end{abstract}

\maketitle

\sectionprl{Introduction}
Equilibrium thermodynamics was established through the study of macroscopic systems with Avogadro-scale particle numbers, providing a solid foundation for investigating many thermodynamic phenomena, such as equilibrium phase transitions \cite{landau1980statistical}. Meanwhile, the past three decades have witnessed crucial developments in nonequilibrium thermodynamics, primarily focusing on small systems subject to large fluctuations \cite{jarzynskinonequilibrium}. Stochastic thermodynamics has established a thermodynamically consistent definition of entropy \cite{seifert2025stochastic}. This leads to many discoveries of universal relations \cite{BaratoPhysRevLett.114.158101TUR, GingrichPhysRevLett.116.120601TUR, dechant2018current, aslyamov2025nonequilibrium, SeifertPhysRevLett.95.040602IntegratedFluctuationTheorem, CrooksPhysRevE.60.2721CrooksRelation, EvansPhysRevLett.71.2401SecondLawViolation, GallavottiPhysRevLett.74.2694, Lebowitz1999GCsymmetry, Jarzynski2000HamiltonianFluctuationTheorem, Gaspard10.1063/1.1688758ChemFT,shiraishi2018speed,dechant2022minimum,van2023thermodynamic,lee2022speed,funo2019speed}.
\begin{table*}[ht]

\begin{ruledtabular}
\begin{tabular}{lcccc}
Bifurcation type
  & Supercritical Pitchfork
  & Transcritical
  & Saddle-Node
  & Supercritical Hopf \\
\hline

Normal form
  & $\dot{c}=(\theta-\theta_c)c-c^3$
  & $\dot{c}=(\theta-\theta_c)c-c^2$
  & $\dot{c}=(\theta-\theta_c)-c^2$
  & $\begin{cases}
       \dot{c}_1=(\theta-\theta_c)c_1-\omega_0 c_2- c_1(c_1^2+c_2^2)\\
       \dot{c}_2=\omega_0 c_1+(\theta-\theta_c)c_2-c_2(c_1^2+c_2^2)
     \end{cases}$ \\

Stable solutions
  & \raisebox{-0.5\height}{\scalebox{0.7}{\input{./pitchfork.tex}}}
  & \raisebox{-0.5\height}{\scalebox{0.7}{\input{./transcritical.tex}}}
  & \raisebox{-0.5\height}{\scalebox{0.7}{\input{./saddlenode.tex}}}
  & \raisebox{-0.5\height}{\scalebox{0.7}{\input{./hopf.tex}}}
\\
\hline
\hline
Schematic of ${\rm Var}\sigma$
  & \raisebox{-0.5\height}{\scalebox{0.7}{\input{./div_twosided.tex}}}
  & \raisebox{-0.5\height}{\scalebox{0.7}{\input{./div_twosided.tex}}}
  & \raisebox{-0.5\height}{\scalebox{0.7}{\input{./div_saddle_node.tex}}}
  & \raisebox{-0.5\height}{\scalebox{0.7}{\input{./div_lambdatype.tex}}} 
\\
$(\alpha_{-}, \alpha_{+})$
  & (\,2\,~, 2\,) & (\,2~, 2~) & ($\varnothing$, 1) &  ($0_-$, ~1\,) \\
$(\beta_{-}, \beta_{+})$
  & ($0_-$, ${1\over 2}$) & ($0_-$,$0_-$) & ($\varnothing$, ${1\over 2}$) & ($0_-$, $0_-$) 
%
\end{tabular}
\end{ruledtabular}
\caption{\label{tab:critical-exponent}Summary of the universality classes in the entropy production. The stable solutions for the normal form equations are depicted with a blue curve. The exponents $\alpha_{\pm}$ and $\beta_{\pm}$ are the diverging exponents for $\theta > \theta_c$ and $\theta <\theta_c$, respectively. The symbol $0_-$ means that the exponents are nonpositive meaning no divergence. $\varnothing$ implies that the exponent depends on specific models. These exponents are the leading generic exponents; symmetry-protected or fine-tuned cases may show reduced singularities. The pitchfork and Hopf bifurcation show asymmetric behavior in the critical behavior in $\partial_{\theta}\sigma$ and ${\rm Var}\sigma$, respectively. One can also check that the generic relation $\alpha - 2\beta \ge 0$ valid for any types of bifurcations (See Eq.(\ref{cramerraoineq})) is satisfied. }
\end{table*}

The robust microscopic framework of stochastic thermodynamics now facilitates the exploration of universal entropy properties in macroscopic scale, opening new avenues such as the large-deviation analysis of entropy distributions \cite{FalascoRevModPhys.97.015002} and the Maxwell demon in macro-scale \cite{freitas2022maxwell, Parrondo10.1063/1.1388006MaxwellSymmetry}. Notably, recent numerical studies have revealed a nontrivial divergence of entropy production fluctuations at nonequilibrium phase transitions including chemical reaction networks and nonequilibrium spin systems \cite{NguyenPhysRevE.102.022101Schlogl, Remlein10.1063/5.0203659, Nguyen10.1063/1.5032104ChemicalOscillator, FiorePhysRevE.104.064123DiscontinuousPhaseTransition, ShimPhysRevE.93.012113ActiveMatterBased, PtaszynskiPhysRevE.111.034125CurieWeiss, OberreiterPhysRevLett.126.020603TimeCristal, chudak2025synchronizationthermodynamicallyconsistentstochastic, KewmingPhysRevA.106.033707Kerr}. To develop this direction further, it is instructive to revisit the well-established framework of equilibrium phase transitions. Spin systems have long provided a prototypical paradigm for equilibrium critical phenomena, where the universality classes of critical exponents are classified based on the underlying symmetries of the system \cite{goldenfeld1992lectures,nishimori2010elements}. Furthermore, universal relations that transcend specific symmetry classes, exemplified by the Rushbrooke inequality, have also been firmly established \cite{nishimori2010elements}. Given the established framework of equilibrium phase transitions, it is a natural progression to investigate whether analogous principles govern entropy production in nonequilibrium phase transitions. In this paper, we address this question by focusing on homogeneous reversible chemical reactions as a prototypical nonequilibrium phenomenon. Just as universality classes have been characterized in equilibrium spin systems, we derive the universality classes and universal relations for entropy production within the chemical reactions. 

We focus on the two quantities related to entropy production, i.e., variance of entropy production ${\rm Var}\sigma$ and the parametric response of the entropy production rate $\partial_{\theta}\sigma$ when the concentration of a chemostatted chemical species is varied as a control parameter $\theta$:
\begin{align}
\begin{split}
{\rm Var}\sigma &\propto |\theta - \theta_c|^{-\alpha_{\pm}} \, , \\
\partial_{\theta}\sigma &\propto |\theta - \theta_c|^{-\beta_{\pm}} \, , 
\end{split}
\label{alphabetadef}
\end{align}
where the precise definition of ${\rm Var}\sigma$ and $\partial_{\theta}\sigma $ is defined below in Eqs.(\ref{defofvarsigma}) and (\ref{defofresponse}). The value $\theta_c$ is the critical point, and $\alpha_{\pm}$ and $\beta_{\pm}$ are the critical exponents. The subscript $\pm$ implies the exponents for $\theta > \theta_c$ and $\theta < \theta_c$, respectively. Employing the mathematically well-established {\it center manifold theory} \cite{Kuznetsov2023Elements}, we complete generic list of the critical exponents across archetypal bifurcations classified at the lowest order of the normal form, i.e., pitchfork, transcritical, saddle-node, and Hopf bifurcations (Table \ref{tab:critical-exponent}). Here, ``{\it generic}'' implies the general case, excluding exceptional scenarios such as the vanishing of coefficients for diverging terms. Diverging behavior in the fluctuation of entropy production as well as in the parametric response are governed by the stability properties of the stable solutions of the deterministic equations in the macroscopic-scale (thermodynamic) limit. Especially for the Hopf bifurcation, we employ the Floquet theory \cite{chicone1999ordinary} to find the divergent behavior in the fluctuation. The table \ref{tab:critical-exponent} shows that the exponents can be asymmetric across the critical point. 

We also derive a universal inequality valid for any types of bifurcation beyond lowest-order normal forms in the central manifold theory, i.e., $\alpha - 2 \beta \ge 0$ (Eq. (\ref{cramerraoineq})). The inequality clarifies the generic rule that if the parametric response diverges, the fluctuation of entropy production must strictly diverge. The converse, however, is not necessarily true. This implies that fluctuations reflect criticality more sensitively than the parametric response. Our findings extend the concept of universality in critical phenomena to the entropy production.

\sectionprl{Setup of homogeneous reversible chemical reaction}
We consider a homogeneous chemical reaction network, where the state of the system is fully characterized by the discrete numbers of the chemical species:
\begin{align}
 \sum_{\ell' \in \mathcal{S}_c} \nabla_{+\rho}^{\ell '} A_{\ell'} +  \sum_{\ell \in \mathcal{S}} \nabla_{+\rho}^{\ell} X_{\ell}  \overset{k_{+\rho}}{\underset{k_{-\rho}}{\rightleftharpoons}} \sum_{\ell \in \mathcal{S}} \nabla_{-\rho}^{\ell} X_{\ell} &+ \sum_{\ell' \in \mathcal{S}_c} \nabla_{-\rho}^{\ell'} A_{\ell'} \, , 
\end{align}
where $X_{\ell}$ denotes the $\ell$th dynamical chemical species whose concentration changes in time, while $A_{\ell'}$ represents the $\ell'$th chemostatted species whose concentration is fixed. The sets of these species are denoted by $\mathcal{S}$ and $\mathcal{S}_c$, respectively. Let $\mathcal{R}$ be the set of all reactions, indexed by $\rho \in \mathcal{R}$. The quantities $\nabla_{\pm\rho}^\ell \in \mathbb{Z}_{\ge 0}$ and $k_{\pm\rho}$ denote the stoichiometric coefficients and the rate constants for the forward ($+$) and backward ($-$) reaction in the reaction channel $\rho$ which are set to positive values ($k_{\pm\rho}>0$). Throughout this paper, we set the Boltzmann constant and temperature to unity ($k_{\text{B}} = T = 1$).

Let $\Omega$ be the volume of the reactor. In the microscopic picture of a finite $\Omega$, the dynamics are well described by a stochastic process for the discrete numbers of chemical species. Let $n_{\ell}$ ($\ell \in \mathcal{S}$) be the number of molecules of the $\ell$th chemical species. The probability distribution $P(\bm{n}, t)$ at time $t$ obeys the chemical master equation \cite{gardiner2009stochastic}:
\begin{align}
\frac{\partial}{\partial t}P(\bm{n}, t) &=
\sum_{\rho \in \mathcal{R}} \sum_{s=\pm} \Bigl[ W^{\rho}_{\bm{n}, \bm{n}+s\nabla_{\rho}}  P(\bm{n}+s \nabla_{\rho},t) \nonumber \\
&\quad - W^{\rho}_{\bm{n}+s\nabla_{\rho},\bm{n}}  P(\bm{n},t) \Bigr] \, , 
\label{microprobeq}
\end{align}
where $\nabla_{\rho}$ is the state-change vector, the $\ell$th element of which is $\nabla_{\rho}^{\ell}:=\nabla_{-\rho}^{\ell}-\nabla_{+\rho}^{\ell}$. The transition rates are 
\begin{align}
W^{\rho}_{\bm{n}, \bm{n} \mp \nabla_{\rho}} &= \Omega k_{\pm\rho} \prod_{\ell' \in \mathcal{S}_c} a_{\ell'}^{ \nabla_{\pm\rho}^{\ell'}}\prod_{\ell \in \mathcal{S}} \frac{(n_\ell \mp \nabla_{\rho}^{\ell} )!}{ (n_\ell - \nabla_{\mp\rho}^{\ell})!} \Omega^{- \nabla_{\pm\rho}^{\ell}} \, , \nonumber
\label{wrhonn}
\end{align}
where $a_{\ell'}$ is the fixed concentration of the chemostatted species $A_{\ell'}$ ($\ell' \in \mathcal{S}_c$).

Consider a trajectory $\Gamma$ from time $t=0$ to $t=\tau$, $\Gamma : \bm{n}_0 \xrightarrow{(\rho_1, s_1, t_1)} \bm{n}_1  \cdots \xrightarrow{(\rho_i, s_i, t_i)} \bm{n}_i  \cdots \xrightarrow{(\rho_{r}, s_{r}, t_{r})} \bm{n}_{r} \, , $ where $t_i$ denotes the time at which reaction $\rho_i$ occurs with $s_i = \pm 1$ (forward or backward reaction), and $r$ is the total number of reactions in the trajectory. The vector $\bm{n}_i$ represents the state just after the reaction $\rho_i$, and $\bm{n}_0$ is the initial number vector. We define the trajectory-dependent environment entropy production as \cite{seifert2025stochastic, Horowitz10.1063/1.4927395DiffusionApproximation}:
\begin{align}
\Sigma_{\rm e} (\Gamma ) &:=\sum_{\rho\in {\cal R}} \mu_{\rho} Z_{\rho} (\Gamma ) \, , ~~~~Z_{\rho} (\Gamma ) :=\sum_{i=1}^{r} s_i\,\delta_{\rho, \rho_i}\, , 
\end{align}
where $\delta_{\rho, \rho_i}$ is the Kronecker's delta, and $Z_{\rho} (\Gamma )$ is the extent of the reaction $\rho$. The quantity $\mu_{\rho}$ is the entropy increment via reaction channel $\rho$, which is defined as $\mu_{\rho} = \ln \bigl[ (k_{+\rho}/k_{-\rho})\prod_{\ell' \in \mathcal{S}_c} a_{\ell'}^{(\nabla_{+\rho}^{\ell'} - \nabla_{-\rho}^{\ell'})} \bigr]$, and hence $\Sigma_{\rm e}$ represents the entropy flowing into the thermal environment through the exchange of chemostatted species.

In the macroscopic-scale limit $\Omega\to\infty$, the concentration of the chemical species is defined as $x_{\ell} (t) = \lim_{\Omega\to\infty} \Omega^{-1} \sum_{\bm{n}'} n_{\ell} P(\bm{n}', t)$, which obeys the deterministic equation based on the law of mass action \cite{GrootBA03510754NonequilibriumThermo, Epstein10.1093/oso/9780195096705.001.0001NonlinearChemicalDynamics}
\begin{align}
\begin{split}
\dot{x}_{\ell} = F_{\ell} ({\bm x}(t)),& ~~~ F_{\ell} ({\bm x}(t)) = \sum_{\rho \in {\cal R}} \nabla_{\rho}^{\ell} (J_{+\rho} - J_{-\rho}) \, , \\
J_{\pm \rho} ({\bm x}(t))&:=k_{\pm \rho} \prod_{\ell' \in \mathcal{S}_c} a_{\ell'}^{ \nabla_{\pm\rho}^{\ell'}} \prod_{\ell \in \mathcal{S}} [x_{\ell}(t)]^{\nabla_{\pm\rho}^{\ell}} \, . 
\end{split}
\label{determeq}
\end{align}
The deterministic equation is rigorously derived from the microscopic stochastic description, as proven in \cite{Kurtz10.1063/1.1678692, Kurtz_1971}. The environment entropy production rate per unit volume is defined as
\begin{align}
\begin{split}
\sigma (\tau ) &= {d \over d\tau} \lim_{\Omega\to \infty} {1\over \Omega}\sum_{\rho\in{\cal R}} \mu_{\rho} \langle {Z}_{\rho} (\Gamma ) \rangle \, ,  \\
&=\!\sum_{\rho\in{\cal R}} \mu_{\rho} [J_{+\rho} ({\bm x}(\tau) ) -  J_{-\rho} ({\bm x}(\tau) ) ] =: \sigma ({\bm x}(\tau) ) \, , 
\end{split}
\end{align}
where $\langle ... \rangle$ denotes the ensemble average over trajectories. It is well-known that the macroscopic total entropy production rate per unit volume is given by $\sigma_{\rm m} (\tau) =\sum_{\rho} (J_{+\rho} ({\bm x}(\tau)) - J_{-\rho}({\bm x}(\tau)) \ln ({J_{+\rho}({\bm x}(\tau)) / J_{-\rho}({\bm x}(\tau) }))$ \cite{RaoPhysRevX.6.041064ChemicalThermo}, which is connected to environment entropy production rate as $\sigma_m (\tau)= \sigma(\tau) - \sum_{\ell\in {\cal S}}\dot{x}_{\ell}(\tau ) \ln x_{\ell} (\tau )$. The proof is provided in the Supplemental Material (SM) \cite{Supplemental}. In this paper, we consider realistic chemical reactions where the concentrations of all species are finite, i.e., $| {\bm x}(t) | < \infty$. This condition ensures that the quantites $J_{\pm} ({\bm x}(t))$, $\sigma (t)$, and $\partial_{x_{\ell}} ({\bm x}(t))$ are all finite.

We control the concentrations of the chemostatted species and define a control parameter $\theta$ as
\begin{align}
a_{\ell} &= a_{\ell}^{(0)} + \theta \hat{a}_{\ell} \, , 
\end{align}
where $\hat{\bm{a}} = (\hat{a}_{\ell})_{\ell \in \mathcal{S}_c}$ is a unit vector specifying the direction in which the concentrations of the chemostatted species vary from a reference state $\bm{a}^{(0)}$. The parameter $\theta$ represents the amplitude along this direction. We then investigate the critical exponents $\alpha_{\pm}$ and $\beta_{\pm}$, as defined in Eq. (\ref{alphabetadef}). Note that the values of these critical exponents are independent of the specific functional choice between $a_{\ell}$ and $\theta$, provided that the dependence is linearized near the bifurcation point \footnote{For the choice of $a_{\ell} = a_{\ell}^{(0)} + g(\theta) \hat{a}_{\ell}$, the critical behavior shows $[g(\theta) -g(\theta_c)]^{-\alpha}\sim [g^{\prime}(\theta_c)]^{-\alpha} (\theta - \theta_c)^{-\alpha}$, which results in the same exponent $\alpha$.}

\sectionprl{Formula of fluctuation of entropy production}
We use the trajectory-dependent environment entropy to define the fluctuation of the entropy production in the macroscopic-scale limit:
\begin{align}
{\rm Var}\sigma &= \lim_{\tau\to\infty}{1\over \tau} \lim_{\Omega\to\infty} {1\over \Omega} {\rm Var} \Sigma_{e} \, , \label{defofvarsigma}
\end{align}
where $\mathrm{Var} \Sigma_{\text{e}}$ is the variance of the trajectory-dependent environment entropy evaluated over the ensemble of stochastic trajectories. Note that we take the macroscopic-scale limit $\Omega \to \infty$ prior to the long-time limit $\tau \to \infty$, which is the standard procedure for analyzing critical behavior near phase transitions \cite{nishimori2010elements}. Relying on the exact analysis by Kurtz \cite{Kurtz10.1063/1.1678692,Kurtz_1971}, which justifies the system size expansion \cite{gardiner2009stochastic}, we consider the asymptotically exact decomposition $n_{\ell} (\Gamma)/\Omega = x_{\ell} (t) + y_{\ell}(t)/\sqrt{\Omega} + \dots$ and $Z_{\rho} (\Gamma)/\Omega = z_{\rho} (t) + w_{\rho}(t)/\sqrt{\Omega} + \dots$, from which the Langevin equations for the variables $y_{\ell}$ and $w_{\rho}$ are derived \cite{Horowitz10.1063/1.4927395DiffusionApproximation}. Using these equations, as briefly outlined in the End Matter (EM) and detailed in the SM, we derive the following formulas (\ref{fpformula}) and (\ref{lcformula}) for the fluctuation of the entropy production in the macroscopic-scale limit. When the stable steady solution of Eq. (\ref{determeq}) is a fixed point (FP) vector, such as in pitchfork, transcritical, and saddle-node bifurcations \cite{Strogatz1994, Kuznetsov2023Elements}, the formula reads:
\begin{align}
\begin{split}
{\rm Var}\sigma &= \sum_{\rho\in {\cal R}} v_{\rho}^2(\bar{\bm x}) {\cal A}_{\rho}(\bar{\bm x})\, ~~~~~~~~~~~~~~~~~~~~~~~~~~[{\rm FP}]\, ,  \label{fpformula}\\ 
v_{\rho} &= \mu_{\rho} -\sum_{\ell, \ell' \in {\cal S}} {\partial_{{x}_{\ell}} \sigma (\bar{\bm x})} [{\mathbb S}^{-1} (\bar{\bm x} )]_{\ell, \ell'} \nabla_{\rho}^{\ell '} \, , 
\end{split}
\end{align}
where $\bar{\bm{x}}$ is the stable fixed point vector of Eq. (\ref{determeq}), and $\mathbb{S} (\bar{\bm{x}})$ is the stability matrix (Jacobian) evaluated at the fixed point, with elements $[\mathbb{S} (\bar{\bm{x}})]_{\ell, \ell'} = \partial F_{\ell} / \partial x_{\ell'} |_{\bm{x}=\bar{\bm{x}}}$. The kinetic weight factor $\mathcal{A}_{\rho}(\bar{\bm{x}})$ is defined as $\mathcal{A}_{\rho}(\bar{\bm{x}}) := J_{+\rho}(\bar{\bm{x}}) + J_{-\rho}(\bar{\bm{x}})$. Conversely, when the stable steady solution of Eq. (\ref{determeq}) is a limit cycle (LC) \cite{Kuznetsov2023Elements, Strogatz1994}, typically arising from a Hopf bifurcation, the formula is given by:
\begin{align}
\begin{split}
{\rm Var}\sigma &= \! \lim_{\tau\to\infty}\!{1\over \tau}\! \int_{0}^{\tau} \!\! ds \! \sum_{\rho}\! v_{\rho}^2(\bar{\bm x}(s)) {\cal A}_{\rho}(\bar{\bm x}(s))~[{\rm LC}] , \label{lcformula}\\ 
v_{\rho} (\bar{\bm x}(s)) &=\!\mu_{\rho} + \!\!\sum_{\ell, \ell' } \!\int_s^{\tau} \!\! dt \partial_{x_{\ell}} \sigma(\bar{\bm x}(t)) \Phi_{\ell, \ell'} (t,s) \nabla_{\rho}^{\ell'} , 
\end{split}
\end{align}
where $\Phi(t,s) = \mathcal{T}\exp\left[\int_{s}^{t} du \, \mathbb{S} (\bar{\bm{x}}(u))\right]$ is the fundamental matrix solution with the time-ordering operator $\mathcal{T}$. Eq. (\ref{fpformula}) can be directly recovered from Eq. (\ref{lcformula}) by assuming a time-independent solution. Crucially, Eq. (\ref{fpformula}) implies that the generic divergence properties are governed by the eigenvalues of the stability matrix. Using the {\it center manifold theory}, one can show that the spectrum of the stability matrix and the fundamental matrix are given by the center manifold and stable+unstable manifolds (See (S.59) and (S.70) in the SM \cite{Supplemental}). Notably, the stability matrix possesses a zero eigenvalue at the bifurcation point for FP cases, whereas for a Hopf bifurcation, the relevant eigenvalues remain non-zero (purely imaginary). These spectral properties are essential for determining the critical exponents $\alpha_{\pm}$, as explained below.

\sectionprl{Generic diverging behavior of fluctuation}
Based on Eqs. (\ref{fpformula}) and (\ref{lcformula}), we now discuss the \textit{generic} critical exponents $\alpha_{\pm}$. Let $\lambda$ be the eigenvalue of the stability matrix in (\ref{fpformula}), and let ${\bm r}$ and ${\bm l}$ be the corresponding right and left eigenvectors. Near the bifurcation point, we have $v_\rho = \sum_{\lambda,\ell,\ell'}  { r_{\ell}{\partial_{{x}_{\ell}} \sigma (\bar{\bm x})} \nabla_{\rho}^{\ell'} l_{\ell'} / \lambda } + O(1)$. Thus the leading divergence of ${\rm Var}\,\sigma$ is present when $C_\sigma :=(\sum_{\ell} r_{\ell}{\partial_{{x}_{\ell}} \sigma (\bar{\bm x})} )^2 \sum_{\rho}A_\rho(\bar {\bm x}) (\sum_{\ell'}\nabla_{\rho}^{\ell'} l_{\ell'} )^2$ is nonzero. Throughout this paper, ``{\it generic}'' means that such nondegeneracy coefficients do not vanish. If they vanish because of symmetry or parameter tuning, the leading exponent can be reduced. In our derivation, we employ mathematically established normal forms from \textit{center manifold theory} for each bifurcation class \cite{Kuznetsov2023Elements}. The results are summarized in Table \ref{tab:critical-exponent}. To illustrate the derivation, consider the supercritical pitchfork bifurcation. We use the {\it central manifold theory} where the stable solution is a fixed point in a one-dimensional effective phase space; thus, we apply Eq. (\ref{fpformula}). Eigenvalues in the stability matrix arising from the normal form in the center manifold (Table \ref{tab:critical-exponent}) is $\lambda_{+} = -2(\theta - \theta_c)$ for $\theta > \theta_c$, and $\lambda_- = -(\theta_c - \theta)$ for $\theta < \theta_c$. This directly results in the \textit{generic} exponents $(\alpha_+, \alpha_-) = (2,2)$ \footnote{We emphasize that the $c\to -c$ symmetry of the pitchfork normal form does not, by itself, imply $\sigma(c)=\sigma(-c)$. The entropy production is a physical observable expressed in the original concentrations and reservoir affinities, and it generally contains an odd component in the center-manifold coordinate, $\sigma(c)=\sigma_0+s_1c+s_2c^2+\cdots$ with $s_1\neq0$. The pitchfork exponents in Table I correspond to this generic case. Only when the full chemical reaction network has an additional exact symmetry that exchanges the two branches and leaves the environment entropy production invariant is the odd coefficient $s_1$ forbidden. Such symmetry-protected cases are nongeneric in the present sense and may exhibit reduced exponents.}. 
See the SM \cite{Supplemental} for mathematically rigorous computation. The similar computations are applied to the transcritical and saddle-node bifurcations to get the exponents in Table \ref{tab:critical-exponent}.

The most non-trivial analysis involves the Hopf bifurcation, where a limit cycle emerges in the regime $\theta > \theta_c$. Here, we utilize Eq. (\ref{lcformula}) in conjunction with the normal form on the two-dimensional center manifold. Because the limit cycle is a periodic solution, we apply Floquet theory. By extracting the relevant divergent contributions as outlined in the EM and detailed in the SM \cite{Supplemental}, we obtain the exponent $\alpha_+ = 1$. On the other hand, in the regime $\theta < \theta_c$, the stable solution is a fixed point, warranting the use of Eq. (\ref{fpformula}). In a Hopf bifurcation, the relevant eigenvalues of the stability matrix $\mathbb{S}$ never vanish; rather, as understood from the normal form in Table \ref{tab:critical-exponent}, the eigenvalues associated with the center manifold converge to purely imaginary values $\pm i\omega_0$ as $\theta \to \theta_c$. Consequently, fluctuations of the entropy production do not diverge for $\theta < \theta_c$, yielding $\alpha_- = 0_-$, where ``\,$0_-$" is a symbol indicating nonpositive values. Thus, the generic exponents exhibits an asymmetry, as depicted schematically in Table \ref{tab:critical-exponent}. In the EM, we present a numerical demonstration of these phenomena using the Brusselator model for the Hopf bifurcation \cite{Prigogine10.1063/1.1668896, kondepudi2014modern} as well as the model showing the transcritical bifurcation .

\sectionprl{Generic diverging behavior of parametric response of entropy production rate} 
We now discuss the {\it generic} critical exponents $\beta_{\pm}$ for the response of the macroscopic entropy production rate, defined in Eq. (\ref{alphabetadef}) as:
\begin{align}
{\partial_{\theta} \sigma}&:={\partial \over \partial \theta } \lim_{\tau \to \infty}{1\over \tau} \int_0^{\tau} 
dt \, \sigma (t)\, . \label{defofresponse}
\end{align}
With this definition, we evaluate the time average of the instantaneous entropy production rate, which is well-defined even for the limit cycle. For fixed points, taking the long-time average is unnecessary, as the entropy production rate is inherently constant in the steady state. 

In the macroscopic-scale limit, the time-averaged entropy production rate depends explicitly on the control parameter $\theta$, as well as implicitly on $\theta$ through the stable steady solution $\bar{\bm{x}}$. Denoting this steady-state rate as $\sigma(\theta, \bar{\bm{x}})$, we can apply the chain rule to expand the response: $\partial_{\theta} \sigma = \frac{\partial \sigma(\theta, \bar{\bm{x}})}{\partial \theta} + \frac{\partial \bar{\bm{x}}}{\partial \theta} \cdot \frac{\partial \sigma(\theta, \bar{\bm{x}})}{\partial \bar{\bm{x}}} \, .$ We can rigorously show that the divergent behavior arises solely from the factor $\partial \bar{\bm{x}} / \partial \theta$ in the second term, since all other terms remain strictly finite under the condition $|\bm{x}(t)| < \infty$. Using the normal form structures on the center manifold, we evaluate the \textit{generic} exponents $\beta_{\pm}$, as listed in Table \ref{tab:critical-exponent}. For instance, in the pitchfork bifurcation, the stable fixed point is $\bar{c} = \pm \sqrt{\theta - \theta_c}$ for $\theta > \theta_c$ and $0$ for $\theta < \theta_c$. Consequently, the {\it generic} response exponents must be $(\beta_+ , \beta_-) = (1/2, 0_-)$.

In contrast to the fixed-point cases, the Hopf bifurcation requires a careful treatment due to the emergence of a limit-cycle solution. For $\theta>\theta_c$, the dynamics on the center manifold is described by a periodic orbit with amplitude scaling as $\sqrt{\theta-\theta_c}$, and thus the entropy production rate must be evaluated through a time average over one period. Owing to the mass-action structure, the instantaneous entropy production rate is expressed as a polynomial of the state variables, which themselves are proportional to $\sqrt{\theta-\theta_c}$ multiplied by trigonometric functions. Let $\omega$ be an angular frequency $\omega=2\pi/T$ with the period of the limit cycle $T$. Upon time averaging, all terms containing odd powers of $\cos(\omega t)$ or $\sin(\omega t)$ vanish, and only even-order contributions survive. As a result, the averaged entropy production rate depends solely on integer powers of $\theta-\theta_c$, and no singular contribution arises from half-integer scaling. Consequently, parametric response never shows the divergence across the transition, and hence we have  $(\beta_-,\beta_+)=(0_-,0_-)$ \cite{Supplemental}.

\sectionprl{Robust inequality between the exponents $\alpha$ and $\beta$}
Up to this point, we have focused on the most fundamental bifurcations, whose normal forms are governed by the lowest-order non-linear terms (Table \ref{tab:critical-exponent}). In principle, however, one can realize an infinite number of bifurcations by fine-tuning the system parameters such that higher-order terms become the leading contributions in the normal form. Here, we establish a robust relationship between the critical exponents $\alpha$ and $\beta$ that remains valid for any type of bifurcation. Our analytical strategy relies on a Cram\'{e}r-Rao-type bound evaluated in the trajectory space, utilizing the trajectory-dependent environment entropy production. While the detailed calculations are deferred to the SM \cite{Supplemental}, we show the inequality
\begin{align}
|\partial_{\theta} \sigma - r(\theta)|^2 & \le i(\theta) \, {\rm Var}\sigma \, , 
\end{align}
where $r(\theta) = \lim_{\tau\to\infty} \tau^{-1} \lim_{\Omega\to\infty} \Omega^{-1} \langle \partial_{\theta} \Sigma_{\text{e}} (\Gamma) \rangle$, and $i(\theta)$ is the scaled Fisher information per unit volume and per unit time. One can readily show that $|r(\theta)| < \infty$ and $|i(\theta)| < \infty$ from the condition that the macroscopic concentrations remain finite, i.e., $|\bm{x}(t)| < \infty$ \cite{Supplemental}. Finiteness of these terms immediately leads to the fundamental inequality relating the exponents $\alpha$ and $\beta$ as
\begin{align}
\alpha - 2 \beta &\ge 0 \, , \label{cramerraoineq}
\end{align}
for $\alpha \ge 0$ and $\beta \ge 0$. One can readily verify that the generic exponents listed in Table \ref{tab:critical-exponent} satisfy this relation. Furthermore, for the Schl\"ogl model \cite{Schlogl1972, Vellela10.1098/rsif.2008.0476Schlogl}, which exhibits a cusp bifurcation, recent numerical computations along a path through the critical point estimated the critical exponents to be $\alpha = 1.3 \pm 0.2$ and $\beta = 0.65 \pm 0.05$ \cite{Remlein10.1063/5.0203659}. These results strongly suggest that the equality in relation (\ref{cramerraoineq}) is saturated. The inequality  means that if $\partial_{\theta}\sigma$ diverges (i.e., $\beta >0$), ${\rm Var}\sigma$ must strictly diverge (i.e., $\alpha >0$). The converse, however, is not necessarily true, as corroborated by the specific critical exponents in Table \ref{tab:critical-exponent}. This implies that fluctuations reflect criticality more sensitively than parametric responses.

\sectionprl{Concluding remarks}
In this paper, we elucidate the generic critical behavior of entropy production at macroscopic bifurcation points. By deriving general formulas for the entropy production variance Eqs. (\ref{fpformula})-(\ref{lcformula}) and the parametric response (\ref{defofresponse}), we derive the universality-class classification on the critical exponents $\alpha$ and $\beta$ using \textit{center manifold theory}, as shown in Table \ref{tab:critical-exponent}. The general mechanism for divergence was also clarified. Moreover, we establish a universal inequality valid for any bifurcation (Eq. (\ref{cramerraoineq})), dictating that a diverging response ($\beta > 0$) strictly necessitates diverging fluctuations ($\alpha > 0$), but not vice versa. This indicates that fluctuations are more sensitive indicators of criticality than responses.

We finally remark on the order of limits between the volume scale $\Omega$ and the observation time $\tau$. We here take the macroscopic-scale limit $\Omega\to\infty$ first before taking long observation time, as in (\ref{defofvarsigma}) and (\ref{defofresponse}). On the other hand, recent study for the Schl\"ogl model reports the exponentially large fluctuation of the entropy production with respect to the size for the saddle-node bifurcation \cite{NguyenPhysRevE.102.022101Schlogl}, which never occurs in our setup. This effect emerges because they look at the long time limit first before taking the macroscopic-scale limit. The critical behavior can be sensitive on the relative scales of observation time and system size, which somewhat reminds us of Keizer’s paradox in the chemical reaction \cite{keizer1979nonequilibrium,vellela2007quasistationary}. It is intriguing to elucidate the observation-time dependence in the critical phenomena.

\section*{Acknowledgments}
We are supported by JSPS KAKENHI Grant No. JP23K25796, No. JP26H02015, and JP26H00388.

\bibliography{cite}

\clearpage
\section*{End Matter}


\sectionprl{Variance formulas (\ref{fpformula}) and (\ref{lcformula})}
We outline the derivation (\ref{lcformula}) for the limit cycle case, which is reduced to (\ref{fpformula}) for the fixed point cases. The details are explained in the SM \cite{Supplemental}. For the size expansion $n_{\ell} (\Gamma ) /\Omega  \to x_{\ell}(t) + y_{\ell}(t)/\sqrt{\Omega}$ and $Z_{\rho} (\Gamma) / \Omega \to z_{\rho}(t) + w_{\rho}(t)/\sqrt{\Omega}$ \cite{Kurtz10.1063/1.1678692,Kurtz_1971,gardiner2009stochastic, Horowitz10.1063/1.4927395DiffusionApproximation}, we derive the Fokker-Planck equation for the variable $y_{\ell}$. One can then identify the Langevin equations for the variables $y_{\ell}$ and $w_{\rho}$ as \cite{Supplemental}
\begin{align}
\begin{split}
\dot{y}_{\ell} &= \sum_{\ell'}{\partial {F_{\ell}} ({\bm x}(t)) \over \partial x_{\ell'}} y_{\ell'} + \sum_{\rho}\nabla_{\rho}^{\ell} \sqrt{{\cal A}_{\rho } ({\bm x} (t) )} \, \xi_{\rho}(t) \, , \\
\dot{w}_{\rho} &= \sum_{\ell'}{\partial {J}_{\rho} ({\bm x}(t)) \over \partial {x}_{\ell'}} y_{\ell'} + \sqrt{{\cal A}_{\rho} ({\bm x} (t) )} \, {\xi}_{\rho}(t) \, , \label{endmatter2}
\end{split}
\end{align} 
where $\xi_{\rho} (t)$ is a Gaussian white noise satisfying $\langle\!\langle \xi_{\rho} (t) \rangle\!\rangle=0$ and $\langle\!\langle \xi_{\rho} (t)\xi_{\rho'} (t')  \rangle\!\rangle=\delta_{\rho,\rho'} \delta (t-t')$ for the noise-average $\langle\!\langle ... \rangle\!\rangle$. The noise amplitude is given by $\mathcal{A}_{\rho } (\bm{x}(t)) = \left[ {J}_{+\rho}(\bm{x}(t)) + {J}_{-\rho}(\bm{x}(t)) \right] $. Formal solution of $w_{\rho} (t)$ is
\begin{align}
 {w}_{\rho}(\tau)  & = \sum_{\rho'}\int_{0}^\tau d s  \, G_{\rho , \, \rho'}(\tau, s ) \, \xi_{\rho' } (s) \, ,  
\end{align}
where the Green's function matrix ${\bm G}$ that is given by 
\begin{align}
 G_{\rho,\,\rho'}(\tau, s) &= \Bigl[ \int_{s}^{\tau} d t \sum_{\ell,\ell'}{{J}_{\rho}(\bm{x}(t))
\over \partial {x}_{\ell} } \Phi_{\ell,\ell'}(t, s) \nabla^{\ell'}_{\rho '}+ \delta_{\rho,\rho'} \Bigr] \nonumber \\
&\times \sqrt{\mathcal{A}_{\rho'}(\bm{x}(s))} \, \label{endmatterGreenFunc} \, .
\end{align}
The matrix $\Phi(t, s)$ is $\Phi(t, s):= {\cal T}\exp\Bigl[ \int_{s}^{t} d s' {\partial {\bm F}(\bm{x}(s')) \over \partial {\bm x} } \Bigr]$ where ${\cal T}$ is the time-ordering operator. The fluctuation of the entropy production is formally formulated as
\begin{align}
{\rm Var}\sigma &=\lim_{\tau\to\infty}{1\over \tau} \sum_{\rho, \rho'} \mu_{\rho} \langle\!\langle w_{\rho} (\tau) w_{\rho'} (\tau )\rangle\!\rangle \mu_{\rho'} \, . \label{endmattervarsigma}
\end{align}
Plugging the solution ${w}_{\rho}(\tau)$ into (\ref{endmattervarsigma}) and taking average over the noise leads to the formula (\ref{lcformula}) of fluctuation of entropy production. 

\sectionprl{Exponents $\alpha_{\pm}$ for the Hopf bifurcation}
We outline the derivation of the exponents $\beta_{\pm}$. See the SM also for more details. We first explain the case of $\theta > \theta_c$ where the limit-cycle appears. We use the Floquet theory approach. 
Let us consider the general setup, where the vector ${\bm x}$ has $n$-components ($n\ge 2$). At the bifurcation point, the manifold is decomposed into $T_{\bar{\bm x}_c}{\mathbb R}^n = E^{\rm c}\oplus E^{\rm su}$ ($\bar{\bm x}_c$: fixed point at $\theta_c$), where $E^{\rm c}$ and $E^{\rm su}$ are respectively the center and stable+unstable manifolds. The center manifold is the two-dimension in the Hopf bifurcation case. We consider the monodromy matrix ${\bm M}$
\begin{align}
{\bm M} &= {\cal T} \exp \Bigl[\int_0^T dt \, {\mathbb S} ({\bm x} (t) ) \Bigr] \, ,
\end{align}
where the left and right eigenvectors for the $j$th Floquet exponent $\nu_j$ satisfy $\bm{w}^\dagger_j {\bm M} = e^{\nu_j T} \bm{w}^\dagger_j$ and ${\bm M} {\bm v}_i = e^{\nu_i T}{\bm v}_j$, respectively. Here, $T$ is the period of the limit cycle. Using the {\it center manifold theory}, one can rigorously show that $\nu_1=0$ and $\nu_2=O(\theta - \theta_c)$ from the center manifold $E^{\rm c}$ and the remaining $(n-2)$ exponents are $O(1)$ from the manifold $E^{\rm su}$ (See the lemma 1 in the SM \cite{Supplemental}). The matrix $\Phi(t,s)$ is decomposed into 
$\Phi(t, s)  =  \sum_{j=1}^n \Phi_{\theta,j}(t, s) $ and $\Phi_{\theta,j} (t,s) = e^{\nu_j (t - s)} \tilde{\bm{v}}_j(t) \tilde{\bm{w}}^\dagger_j(s) $, 
where $\tilde{\bm{w}}^\dagger_j(t) := e^{\nu_j t} \bm{w}^\dagger_j \Phi(0,t)$ and $\tilde{\bm{v}}_j(t) := e^{- \nu_j t} \Phi (t, 0) \bm{v}_j$. Through the computation with this decomposition, one can identify that the diverging term in the fluctuation is given from the component $j=2$ \cite{Supplemental}, which leads to the dominant contribution $[{\rm Var}\sigma]_{\rm D}^{(1)}$ written as
\begin{align}
[{\rm Var}\sigma]_{\rm D}^{(1)} &= { |{\cal F}_{2,0} (\theta ) |^2 \over |\nu_2|^2} {\cal D}_{0}(\theta ) \, , \\
{\cal F}_{2,0} (\theta ) &= \int_0^{T} dt  \sum_{\ell}{\partial \sigma  (\bar{\bm x}(t)) \over \partial x_{\ell}} [\tilde{\bm v}_2 (t)]_{\ell} \, , \nonumber \\
{\cal D}_0 (\theta ) &= \int_0^T d t\, \sum_{\rho ,\ell,\,\ell'}[\tilde{\bm w}_2^{\dagger} (t)]_{\ell} \nabla_{\rho}^{\ell} {\cal A}_{\rho} (\bar{\bm x} (t)) \nabla_{\rho}^{\ell'} [\tilde{\bm w}_2^{\dagger} (t)]_{\ell'}^{\ast} \, . \nonumber
\end{align}
Utilizing the lemma $2$ and $3$ in the SM \cite{Supplemental}, one finds the numerator is $O(\theta - \theta_c)$, and hence this term becomes $O(1/(\theta - \theta_c))$ leading to $\alpha_+=1$.  

In the case of $\theta < \theta_c$ where the stable solution is a fixed point vector, we use the formula (\ref{fpformula}). Using the center manifold theory, one can show that the inverse of the stability matrix is given by the contributions from $E^{\rm c}$ and $E^{\rm su}$ (See Eq. (S59) in the SM \cite{Supplemental}). The stability matrix arising from the center manifold  is given as
\begin{align}
\tilde{\mathbb S}^{({\rm c})} = 
\left( 
\begin{array}{cc}
\theta - \theta_c , & -\omega_0 \\
\omega_0 , & \theta - \theta_c
\end{array}
\right) \, . 
\end{align}
Hence, no singularity exists in the inverse of the stability matrix. This means that no diverging behavior exists in this regime, leading to $\alpha_-=0_-$, i.e., a nonpositive value.
\begin{figure}[ht]
\begin{center}
\includegraphics[scale=0.45]{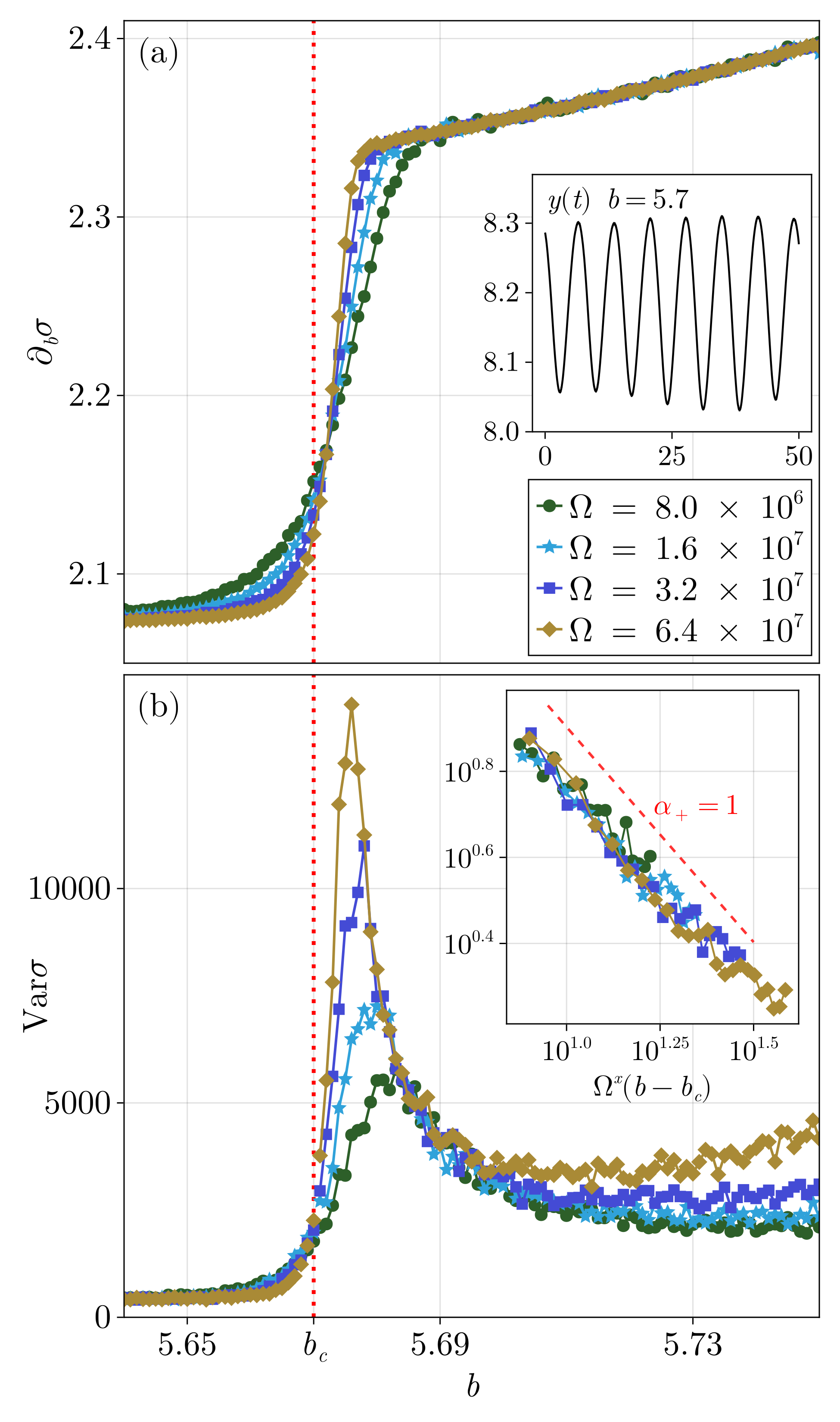}
\end{center}
\vspace{-0.5cm}
\caption{
(a): Parametric response $\partial_b \sigma$. The inset shows the time evolution of the concentration $y(t)$ of the chemical species $\mathrm{Y}$ at $b = 5.69$ in the oscillatory phase. The red vertical dotted line indicates $b_c$. A finite gradient at $b_c$ is observed, and hence $\beta_{\pm}=0_-$ (nonpositive values) are confirmed. (b): The variance $\mathrm{Var}\,\sigma$. The inset shows the power-law slope with exponent $\alpha_+ = 1$. To estimate the critical value $b_c$ indicated by the red dotted line, we use the ansatz $\mathrm{Var}\,\sigma = \Omega^{y} f_\sigma\!\left(\Omega^{x}(b - b_c)\right)$, with $b_c = 5.67$, $x = 0.40$, and $y = 0.42$. }
\label{fig:left}
\end{figure}

\sectionprl{Hopf bifurcation of the Brusselator model}
To verify the validity of our theoretical framework, we present numerical results for the Brusselator, a quintessential model exhibiting a supercritical Hopf bifurcation. The reaction scheme of the Brusselator is described as follows:
\begin{align}
\begin{split}
& A \overset{k_{+1}}{\underset{k_{-1}}{\rightleftharpoons}}   X, ~~~
B + X \overset{k_{+2}}{\underset{k_{-2}}{\rightleftharpoons}} 
 Y + D,  \\
& 2X + Y \overset{k_{+3}}{\underset{k_{-3}}{\rightleftharpoons}} 3X,  ~~
X \overset{k_{+4}}{\underset{k_{-4}}{\rightleftharpoons}} E
\label{eq:Brusselator}
\end{split}
\end{align}
where $X$ and $Y$ denote the intermediate chemical species, while $A, B, D,$ and $E$ represent the species maintained under chemostatic conditions \cite{Prigogine10.1063/1.1668896,kondepudi2014modern}. In our simulations, the rate constants are set to $k_{\pm i} = 1$ for $i= 1, \dots, 4$, and the concentrations of $A, D,$ and $E$ are fixed at $a = 1.0$ and $d = e = 0.1$, respectively. We employ the concentration of species $B$, denoted as $\theta = b$, as the bifurcation parameter. Within this parameter regime, the corresponding deterministic rate equations undergo a supercritical Hopf bifurcation at the critical threshold $b_c = 5.67537$. The chemical Langevin equations associated with \eqref{eq:Brusselator} were numerically integrated using the Euler-Maruyama scheme \cite{higham2001algorithmic}. Fig. \ref{fig:left} illustrates the fluctuations in entropy production and the results of finite-size scaling performed within the oscillatory phase. Fig. \ref{fig:left}(a) indicates a finite gradient at $b=b_c$ meaning that no divergence occurs, and hence $\beta_{\pm}=0_-$ (nonpositive values). Fig.\ref{fig:left}(b) clearly shows the exponent $\alpha_{+}=1$ and $\alpha_-= 0_-$.

\sectionprl{Transcritical bifurcation of the chemical reaction model}
To investigate the system's behavior in the vicinity of a transcritical bifurcation, we consider a model governed by the chemical reaction as
\begin{align}
&\text{A} \overset{1}{\underset{1}{\rightleftharpoons}} \text{X + B}\, ,~~\text{D + X}
 \overset{1}{\underset{1}{\rightleftharpoons}} \text{2X}  \, , ~~
 \text{E + 2X}  \overset{1}{\underset{1}{\rightleftharpoons}} \text{3X}. \label{eq:transcriticalModel}
\end{align}
We set the concentrations for chemostatted species satisfying the relation $b = 5a/2  + 2$, $d = 1$, and $e = a + 7/2$, which leads to the deterministic equation
$\dot{x} =- (x - 1/2 )(x - a)(x - 2)$. The transformations $a=2+(2/3)\theta$ and $x= x +(2/3)c$ lead to the effective equation $\dot{c} = \theta c - c^2$ for the regime $0\le c \ll 1$, which is identical to the normal form of the transcritical bifurcation ($\theta_c=0$) in Table \ref{tab:critical-exponent}. 

We change the parameter $a$ near the critical value $a_c=2$. Note that for $a>a_c$, the stable fixed point $\bar{x}$ is $\bar{x}=a$, while for $1/2<a<a_c$, $\bar{x}=2$. We present the numerical simulation results in Fig.\ref{fig:trans}, which clearly shows the exponent $\alpha_{+}=2$ in agreement with the Table \ref{tab:critical-exponent}. We also check $\alpha_-=2$. The behavior of the fluctuation is almost symmetric with respect to $a_c$. In the main plot of Fig.\ref{fig:trans}, a sharp peak is observed near the transition point. This peak is thought to arise because, near the transition point, the attractors $x=a$ or $x=2$ become nearly semistable, leading to transitions between these attractors and the stable attractor $x=1/2$. As the system size increases, this peak converges to a point just above the transition point. Therefore, this behavior is interpreted as a finite-size effect.
\begin{figure}[hb]
\begin{center}
\includegraphics[scale=0.45]{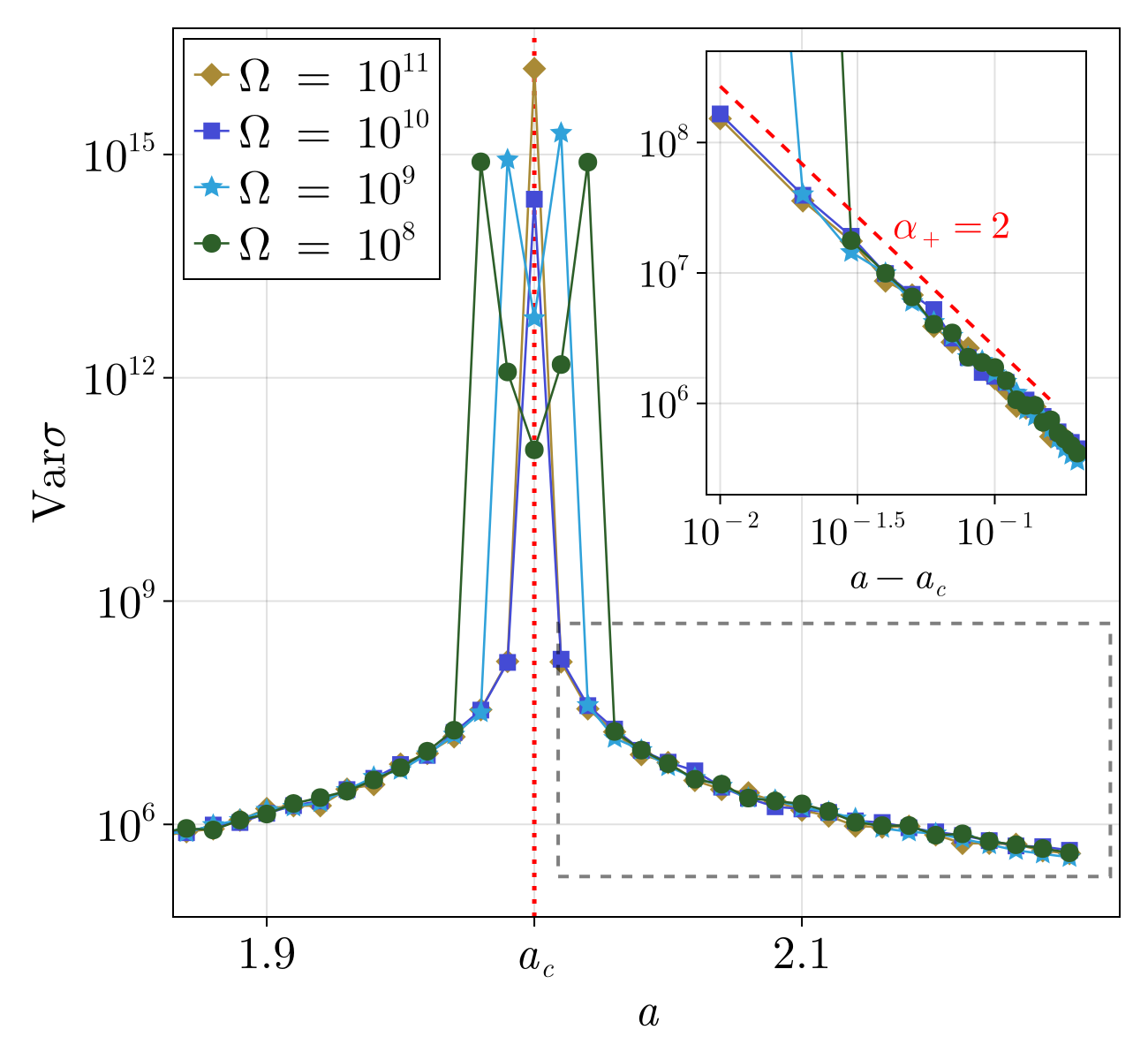}
\end{center}
\vspace{-0.5cm}
\caption{
Variance of the entropy production in the model \eqref{eq:transcriticalModel} as a function of $a$. The red vertical dotted line marks $a_c$. The inset shows a log-log plot of the region $a>a_c$ indicated by the dashed rectangle; only this side is shown because the behavior is nearly symmetric about $a_c$. Since the system size is sufficiently large, no finite-size scaling is performed. The red dashed line in the inset indicates the power-law slope $\alpha_+=2$.
}
\label{fig:trans}
\end{figure}

\clearpage

\setcounter{secnumdepth}{3} 

\setcounter{section}{0}
\setcounter{equation}{0}
\setcounter{footnote}{0}

\renewcommand{\thesection}{S.\Roman{section}}
\renewcommand{\theequation}{S.\arabic{equation}}
\renewcommand{\thefootnote}{*\arabic{footnote}}

\begin{widetext}

\begin{center}
{\large \bf Supplementary Material for \protect \\ ``Universal criticality of entropy production in chemical reaction networks''}\\
\vspace*{0.3cm}
Kyota Tamano$^{1}$ and Keiji Saito$^{1}$ \\
\vspace*{0.1cm}
 $^{1}${\small {\it Department of Physics, Kyoto University, Kyoto 606-8502, Japan}}
\end{center}


\section{Well-mixed chemical reaction}
\label{chemreac}
We consider a well-mixed reversible chemical reaction network, where spatial inhomogeneities are neglected so that the state of the system is fully characterized by the concentrations of the chemical species, without reference to spatial coordinates. The set of chemical species is decomposed into two subsets: the dynamical species ${\cal S}$, whose concentrations evolve in time and serve as the system variables, and the chemostatted species ${\cal S}_c$, whose concentrations are maintained constant through exchange with external reservoirs. We denote the species in ${\mathbb S}$ by $X_\ell$ $(\ell \in {\cal S})$, and those in ${\cal S}_c$ by $A_\ell$ $(\ell \in {\cal S}_c)$. Let ${\cal R}$ denote the set of reactions. Each reaction $\rho \in {\cal R}$ is represented in the form
\begin{align}
 \sum_{\ell \in \mathcal{S}_c} \nabla_{+\rho}^{\ell} A_\ell +  \sum_{\ell \in \mathcal{S}} \nabla_{+\rho}^{\ell} X_{\ell}
 \overset{k_{+\rho}}{\underset{k_{-\rho}}{\rightleftharpoons}}  
    \sum_{\ell \in \mathcal{S}} \nabla_{-\rho}^{\ell} X_{\ell} + \sum_{\ell \in \mathcal{S}_c} \nabla_{-\rho}^{\ell} A_\ell \, , 
\end{align}
where $\nabla_{\pm\rho}^\ell \in \mathbb{Z}_{\ge 0}$ denotes the stoichiometric coefficients. We define the stoichiometric matrix, the matrix element of which is given as
\begin{align}
\nabla_{\rho}^\ell:= \nabla_{-\rho}^\ell - \nabla_{+\rho}^\ell \, . 
\end{align}
The matrix $\nabla$ is $|{\cal S}|\times |{\cal R}|$ matrix, where $|{\cal S}|$ and $|{\cal R}|$ are the cardinality of the sets ${\cal S}$ and ${\cal R}$, respectively. The quantity $\nabla_{\rho}^\ell$ is the net number of molecules of the chemical species $\ell$ produced when the reaction $\rho$ proceeds once in the forward direction. The quantity $k_{\pm\rho}$ denote the forward and backward reaction rate constants, respectively. Throughout this paper, we set $k_{\rm B}\equiv 1$ and $T=1$. 
\subsection{Microscopic probabilistic processes}
We now introduce the microscopic time-evolution equation of the chemical reaction system, i,e., the probabilistic process. Let ${\bm n} = (n_\ell)_{\ell \in {\cal S}}$ denote the vector of molecule numbers of the chemical species. The probability distribution $P({\bm n}, t)$ obeys the following chemical master equation.
\begin{align}
{\partial \over \partial t}P({\bm n}, t) &=
\sum_{\rho \in {\cal R}} \sum_{s=\pm} W^{\rho}_{{\bm n}, {\bm n}+s\nabla_{\rho}}  P({\bm n}+s \nabla_{\rho},t) - W^{\rho}_{{\bm n}+s\nabla_{\rho},{\bm n}}  P({\bm n},t)
 \, , \label{sm:microprobeq}
\end{align}
where $\nabla_{\rho}$ is a vector, the $\ell$th element of which is $\nabla_{\rho}^{\ell}$. Let $\Omega$ denote the system volume, and let $a_\ell$ be the concentration of the chemostatted species $A_\ell$. Then, the transition rates are given by the following expression:
\begin{align}
W^{\rho}_{\bm{n}, \bm{n} \mp \nabla_{\rho}} &= \Omega k_{\pm\rho} \prod_{\ell' \in \mathcal{S}_c} a_{\ell'}^{ \nabla_{\pm\rho}^{\ell'}}\prod_{\ell \in \mathcal{S}} \frac{(n_\ell \mp \nabla_{\rho}^{\ell} )!}{ (n_\ell - \nabla_{\mp\rho}^{\ell})!} \Omega^{- \nabla_{\pm\rho}^{\ell}} \, . \label{sm:wrhonn}
\end{align}

\subsection{Deterministic equations in the macroscopic scale}
In the macroscopic limit (thermodynamic limit) $\Omega\to\infty$, one defines the macroscopic value of chemical species 
\begin{align}
{\bm x}(t) &:=\lim_{\Omega\to\infty} {1\over \Omega }  \sum_{\bm n}{\bm n} P({\bm n},t) \, . 
\end{align}
This macroscopic concentration per unit volume obeys the deterministic equation based on the law of the mass action:
\begin{align}
\dot{\bm{x}}(t) & = {\bm F}({\bm x}(t)) =\sum_{\rho} \nabla_{\rho} J_{\rho} \, , \label{macro_deterministic_eqs}
\end{align}
where ${\bm{J}}$ and ${J}_{\rho}$ are defined as
\begin{align}
{\bm{J}} &:= ({J}_{\rho})_{\rho \in \mathcal{R}} \, , \\
{J}_{\rho} &:= {J}_{+\rho} - {J}_{-\rho} \, , \\
{J}_{\pm\rho}(\bm{x}(t)) 
    &:= k_{\pm \rho} \prod_{\ell' \in \mathcal{S}_c} a_{\ell'}^{ \nabla_{\pm\rho}^{\ell'}} \prod_{\ell \in \mathcal{S}} x_{\ell}^{\nabla_{\pm\rho}^{\ell}}(t) \, . 
\end{align}
We should note that Kurtz has proved the micro-macro correspondence by deriving the deterministic time-evolution from the microscopic probabilistic processes (\ref{sm:microprobeq} in the macroscopic limit \cite{Kurtz10.1063/1.1678692,Kurtz_1971}.
The terms $J_{+\rho}$ and $J_{-\rho}$ are respectively interpreted as the frequency of the forward and backward action in the reaction $\rho$ per unit volume. In addition, Kurtz's analysis identifies the relations 
\begin{align}
J_{\pm \rho} &=\lim_{\Omega\to\infty} {1\over \Omega} \sum_{{\bm n}} W^{\rho}_{{\bm n}\pm \nabla_{\rho},{\bm n}}  P({\bm n},t) \, . 
\end{align}
In this paper, we consider the chemical reactions having finite steady state vector $|{\bm x(t)}|<\infty$. 

\section{Entropy production: micro and macro expressions}\label{equivalence_entropy}
\subsection{Microscopic entropy production}
Let $\Gamma$ denote a trajectory from the initial time $0$ to the final time $\tau$:
\begin{align}
\Gamma &= {\bm n}_0 \xrightarrow{ (\rho_1, s_1, t_1)} {\bm n}_1 \xrightarrow{ (\rho_2, s_2, t_2)} {\bm n}_2 \xrightarrow{}  \cdots  \xrightarrow{}
{\bm n}_{i-1} \xrightarrow{ (\rho_i, s_i, t_i)} {\bm n}_{i} \cdots 
 {\bm n}_{r -1}  \xrightarrow{ (\rho_{r}, s_{r}, t_{r})} {\bm n}_{r}
\end{align}
where $t_i$ denotes the time at which the reaction $\rho_{i}$ occurs with the sign $s_i$. The sign $s_i=\pm 1$ implies the forward (backward) direction in the reaction $\rho_i$. The vector ${\bm n}_i$ is a number vector after the $i$th reaction $\rho_i$. We set ${\bm n}_0$ to a molecule number vector at the initial time $0$. Here $r$ is a total number of the reactions in this trajectory. For a given trajectory, we introduce the 
extend of the reaction $\rho$, denoted by $Z_\rho (\Gamma )$, which is defined explicitly as
\begin{align}
Z_{\rho}(\Gamma) &= \sum_{i=1}^{r} s_i\,\delta_{\rho,\,\rho_i}\, ,
\end{align}
where $\delta_{\rho,\,\rho_i}$ is the Kronecker's delta function. From the physical viewpoint, the trajectory-dependent environment entropy is defined as follows
\begin{align}
\Sigma_{\rm e} (\Gamma ) &=\sum_{\rho} \mu_{\rho} \, Z_{\rho} (\Gamma ) \, , \label{entropybath}
\end{align}
where $\mu_{\rho } $ is the entropy increment via reaction channel $\rho$:
\begin{align} 
\mu_{\rho} &= \ln \Bigl[ {k_{+\rho} \over k_{-\rho}}  \prod_{\ell' \in S^{\mathrm{c}}} 
a_{\ell'}^{- \nabla_{\rho}^{\ell'} } \Bigr]. \label{sp_murho}
\end{align}
Note that the total entropy production rate at the steady state is equivalent to the bath entropy production rate, since the system entropy production rate vanishes. In this paper, we use this physical expression (\ref{entropybath}) for computing the entropy production. 

On the other hand, the stochastic thermodynamics suggests that the environment contribution to entropy production is computed from the probability ratio between the forward and backward trajectories:
\begin{align}
\Sigma_{\rm stc} (\Gamma ) &:=   \sum_{i=1}^{N_r} \ln {W^{\rho_i}_{{\bm n}_{i}, {\bm n}_{i - 1}} \over  W^{\rho_i}_{{\bm n}_{i - 1}, {\bm n}_{i}} } \, . \label{app:eq:stochastic-entropy-production}
\end{align}
In this paper, however, we do not use this formulation.

\subsection{Macroscopic entropy production}
For the macroscopic chemical reaction where the dynamics obeys the deterministic motion (\ref{macro_deterministic_eqs}), the entropy production rate per unit volume is known to be 
\begin{align}
\sigma_m (\tau ) &= \sum_{\rho} \left[ {J}_{+\rho} ({\bm x}(\tau )) - {J}_{-\rho} ({\bm x}(\tau )) \right] \ln \frac{{J}_{+\rho}({\bm x}( \tau ))}{{J}_{-\rho} ({\bm x}(\tau ))} \, . \label{sp:def_macro_entropy}
\end{align}

\subsection{Relations between micro and macro entropy productions}
We consider relation between the entropy production rates using the deterministic equation and environment entropy. We define the entropy production rate for the environment entropy production per unit volume:
\begin{align}
\sigma (\tau ) &= {d\over d\tau} \lim_{\Omega \to \infty} {1\over \Omega} \langle \Sigma_{\rm e}(\Gamma) \rangle \, , 
\end{align}
where $\langle ... \rangle$ implies taking an average over path (trajectory) probability. 
One can rigorously show the following relation 
\begin{align}
\sigma (\tau)&= \sigma_m (\tau) + \sum_{\ell \in {\cal S}} \dot{x}_{\ell}(\tau) \ln x_{\ell} (\tau) \, .\label{micromacroentropyprod2}
\end{align}
The second term $\sum_{\rho}\sum_{\ell \in {\cal S}} \dot{x}_{\ell} (\tau ) \ln  x_{\ell}(\tau )$ is regarded as the contribution from the system's entropy. 

The relation (\ref{micromacroentropyprod2}) is readily proved using the micro-macro correspondence proven by Kurtz \cite{Kurtz10.1063/1.1678692,Kurtz_1971}. The environment entropy production per unit volume is given as $\lim_{\Omega \to \infty} \frac{1}{\Omega} \langle \Sigma_{\rm e} (\Gamma) \rangle = \sum_{\rho} \mu_{\rho} \lim_{\Omega \to \infty} \frac{1}{\Omega} \langle Z_{\rho} \rangle $, where 
\begin{align}
\langle Z_{\rho} (\Gamma ) \rangle &= \Bigl< \sum_i s_i\, \delta_{\rho, \rho_i}\Bigr>=\int_{0}^{\tau} dt \sum_{\bm{n}'} \left( W^{\rho}_{\bm{n}'+ \nabla_{\rho}, \bm{n}'} -W^{\rho}_{\bm{n}' - \nabla_{\rho}, \bm{n}'} \right)
P({\bm n}', t).
\end{align}
By dividing this by $\Omega$ and taking the macroscopic limit $\Omega\to\infty$, we have 
\begin{align}
\lim_{\Omega \to \infty} \frac{1}{\Omega} \langle Z_{\rho} (\Gamma )\rangle &= \int_{0}^{\tau} dt \,  {J}_{\rho}(\bm{x}(t)) \, , 
\end{align}
where we have used the micro-macro correspondence which is mathematically proven by Kurtz \cite{Kurtz10.1063/1.1678692,Kurtz_1971}. This leads to 
\begin{align}
\sigma (\tau) &= {d\over d \tau} \lim_{\Omega \to \infty} \frac{1}{\Omega} \langle \Sigma_{\rm e} (\Gamma) \rangle = \sum_{\rho} \mu_{\rho} {J}_{\rho}(\bm{x}(\tau )) \nonumber \\
&= \sum_{\rho} \Bigl[ {J}_{\rho}(\bm{x}(\tau )) \ln \frac{{J}_{+\rho} ({\bm x}(\tau ))}{{J}_{-\rho}({\bm x}(\tau ))} + {J}_{\rho}(\bm{x}(\tau )) \sum_{\ell \in {\cal S}} \ln  \left( x_{\ell}(\tau ) \right)^{\nabla_{\rho}^{\ell}} \Bigr] 
 \nonumber \\
&=\sigma_{m} (\tau ) + \sum_{\rho}\sum_{\ell \in {\cal S}} {\nabla_{\rho}^{\ell}}{J}_{\rho}(\bm{x}(\tau )) \ln  x_{\ell}(\tau )  \nonumber \\
&=\sigma_{m} (\tau ) + \sum_{\rho}\sum_{\ell \in {\cal S}} \dot{x}_{\ell} (\tau ) \ln  x_{\ell}(\tau ) \, . \label{prod2_lasteq}
\end{align}
At the third and fourth lines, we use the equation (\ref{macro_deterministic_eqs}). 

\subsection{Several properties due to finite concentrations}\label{sm:pro} 
In this paper, we consider the standard chemical reaction with finite concentrations, i.e., $|{\bm x(t)}|< \infty$ for $\forall t$. This assumption guarantees several properties as follows.
\begin{description}
\item[Property 1] $|J_{\pm \rho} ({\bm x}(t))|< \infty$. Long-time average, $\lim_{\tau\to\infty}{1\over\tau} \int_0^{\tau}dt J_{\pm \rho} ({\bm x}(t))$ is also finite. 
\item[Property 2] $|\sigma ({\bm x}(t))| < \infty$. Long-time average of the entropy production rate $\lim_{\tau\to\infty}{1\over \tau}\int_0^{\tau} dt \sigma ({\bm x}(t))$ is also finite. 
\item[Property 3] The gradient vector of entropy production rate is finite, i.e., $|\partial_{x_{\ell}} \sigma ({\bm x}(t))| < \infty$. 
\end{description}

\section{Fluctuation formula of the entropy production}\label{entropyfluctuation_chemLangevin}

\subsection{Size-expansion and the Langevin equations}\label{chemicalLangevineqs}
Based on the size expansion by Van Kampen and rigorous analysis by Kurtz \cite{Kurtz10.1063/1.1678692,Kurtz_1971}, we can consider the dynamics of the deviation from the deterministic motion ${\bm x}(t)$ in the form
\begin{align}
n_{\ell} (\Gamma ) /\Omega  &\to x_{\ell}(t) + y_{\ell}(t)/\sqrt{\Omega} \, , 
\end{align}
where the deterministic part $x_{\ell} (t)$ obeys the equation $\dot{x}_{\ell} = F_{\ell}({\bm x})  = \sum_{\rho}\nabla^{\ell}_{\rho} J_{\rho}({\bm x})$ and $y_{\ell} (t)/\sqrt{\Omega}$ is the fluctuating deviation part. We again emphasize that this decomposition is mathematically justified by rigorous analysis by Kurtz \cite{Kurtz10.1063/1.1678692,Kurtz_1971}. Likewise, one can also consider the fluctuating deviation from the systematic cumulative number of occurrences per volume in the form
\begin{align}
Z_{\rho} (\Gamma) / \Omega &\to z_{\rho}(t) + w_{\rho}(t)/\sqrt{\Omega} \, , 
\end{align} 
where the deterministic part $z_{\rho} (t)$ obeys the equation $\dot{z}_{\rho} = J_{\rho}({\bm x})$. 

The Langevin equations with respect to ${\bm y}(t)$ and ${\bm w}(t)$ are readily obtained by considering the Fokker-Planck equation for the variable $\bm y$ defining 
\begin{align}
f({\bm y} , t)&:={1\over \Omega ^{|{\cal S}|/2}}\sum_{{\bm n}} P({\bm n} , t) \prod_{\ell \in {\cal S}} \delta (n_{\ell} - \Omega ( x_{\ell } (t) +  y_{\ell}/\sqrt{\Omega} )) \, .
\end{align}
The time-evolution of the distribution $f({\bm y},t)$ is computed as 
\begin{align}
\partial_t f({\bm y},t) &=\sqrt{\Omega}\sum_{\ell}\dot{x}_{\ell}{\partial \over \partial y_{\ell}} f({\bm y},t) +{1\over \Omega ^{|{\cal S}|/2}}\sum_{{\bm n}} \partial_tP({\bm n} , t)  \prod_{\ell \in {\cal S}} \delta (n_{\ell} - \Omega x_{\ell } (t) - \sqrt{\Omega} y_{\ell})  \, . 
\end{align}
Performing the formal Taylor's expansion and using integration by parts several times, one arrives at the following expression in the large $\Omega$ limit:
\begin{align}
\partial_t f({\bm y} ,t) &= -  \sum_{\ell, \ell'}{\partial \over \partial y_{\ell}}\sum_{\rho}
\nabla_{\rho}^{\ell} {y}_{\ell'} {\partial \over \partial x_{\ell'}} ({J}_{+\rho } ({\bm x}(t) ) - {J}_{-\rho } ({\bm x}(t) ) )  f({\bm y} ,t) \nonumber \\
&+{1\over 2}\sum_{\ell, \ell'}{\partial^2 \over \partial y_{\ell}\partial y_{\ell'}}
\sum_{\rho}\nabla_{\rho}^{\ell}\nabla_{\rho}^{\ell'}  ({J}_{+\rho }({\bm x}(t)) +{J}_{-\rho }({\bm x}(t)) )  f({\bm y} ,t)  \, . \label{chemfp}
\end{align}
Through this Fokker-Planck equation, one can identify the corresponding Langevin equation:
\begin{align}
\dot{y}_{\ell} &= \sum_{\ell'}{\partial {F_{\ell}} ({\bm x}(t)) \over \partial x_{\ell'}} y_{\ell'} + \sum_{\rho}\nabla_{\rho}^{\ell} \sqrt{{\cal A}_{\rho} ({\bm x} (t) )} \, \xi_{\rho}(t) \, , \label{sp_yequation}\\
\dot{w}_{\rho} &= \sum_{\ell'}{\partial {J}_{\rho} ({\bm x}(t)) \over \partial {x}_{\ell'}} y_{\ell'} + \sqrt{{\cal A}_{\rho} ({\bm x} (t) )} \, {\xi}_{\rho}(t) \, . \label{sp_wequation}
\end{align} 
Here, $\xi_{\rho} (t)$ is a Gaussian white noise satisfying $\langle\!\langle \xi_{\rho} (t) \rangle\!\rangle=0$ and $\langle\!\langle \xi_{\rho} (t)\xi_{\rho'} (t')  \rangle\!\rangle=\delta_{\rho,\rho'} \delta (t-t')$, where $\langle\!\langle ... \rangle\!\rangle$ implies a noise average. The amplitude on the noise ${\cal A}_{\rho} ({\bm x} (t) )$ is given by
\begin{align}
\mathcal{A}_{\rho } (\bm{x}(t))
&:=  \left[ {J}_{+\rho}(\bm{x}(t)) + {J}_{-\rho}(\bm{x}(t)) \right] \, .
\end{align}

\subsection{Fluctuation of entropy production}\label{entropyfluc}
We consider the entropy fluctuation based on the physical environment entropy (\ref{entropybath}). We aim to derive the formal expression of fluctuation of the entropy production per unit time, defined as 
\begin{align}
{\rm Var}\sigma &:= \lim_{\tau\to\infty} {1\over \tau} \lim_{\Omega\to\infty}{1\over \Omega} \, {\rm Var} \Sigma_{\rm e} 
=\lim_{\tau\to\infty}{1\over \tau}\lim_{\Omega\to\infty} {1\over \Omega} \left[\Bigl< \left(\sum_{\rho} \mu_{\rho} Z_{\rho} (\Gamma ) \right)^2\Bigr> - \langle \sum_{\rho} \mu_{\rho} Z_{\rho} (\Gamma ) \rangle^2  \right] \, , \label{sp:varsigmaexpression}
\end{align}
where ${\rm Var} \Sigma_{\rm e}$ implies the variance of the environment entropy production. Here, note that we take the thermodynamic limit first and we next consider the long time average. The order of the limitation is crucial, since the physical properties around the bifurcation point are very sensitive on the order. Note that taking the thermodynamic first is a standard procedure to consider the phase transition behavior. 

Since we consider macroscopic-scale fluctuation of the entropy-production, we employ the Langevin equations (\ref{sp_yequation}) and (\ref{sp_wequation}). The formal solution of ${\bm y}(t)$ is readily obtained for the initial conditions ${{\bm y} (t=0)}={\bm 0}$:
\begin{align}
 {y}_{\ell} (t) &= \sum_{\rho}\sum_{\ell'}\int_{0}^{t} d s\, \Phi_{\ell, \ell'}(t, s) \nabla_{\rho}^{\ell'} \sqrt{\mathcal{A}_{\rho}(\bm{x}(s))}\, 
 {\xi}_{\rho}(s) \, ,\\
\Phi(t, s)&:= {\cal T}\exp\Bigl[
\int_{s}^{t} d s'  
{\partial {\bm F}(\bm{x}(s')) \over \partial
{\bm x} } \Bigr] \, , 
\end{align}
where ${\cal T}$ is the time-ordering operator. Using this solution, under the condition ${\bm w} (t=0)={\bm 0}$, we have 
\begin{align}
 {w}_{\rho}(\tau)  & = \sum_{\rho'}\int_{0}^\tau d s  \, G_{\rho , \, \rho'}(\tau, s ) \, \xi_{\rho' } (s) \, ,  
\end{align}
where the Green's function matrix is given by 
\begin{align}
 G_{\rho,\,\rho'}(\tau, s) &= \Bigl[ \int_{s}^{\tau} d t \sum_{\ell,\ell'}{{J}_{\rho}(\bm{x}(t))
\over \partial {x}_{\ell} } \Phi_{\ell,\ell'}(t, s) \nabla^{\ell'}_{\rho '} + \delta_{\rho,\rho'} \Bigr]\sqrt{\mathcal{A}_{\rho '}(\bm{x}(s))} 
\label{Green Func}
\end{align}
Note that the target quantity (\ref{sp:varsigmaexpression}) is obtained as 
\begin{align}
{\rm Var}\sigma &=\lim_{\tau\to\infty}{1\over \tau}\lim_{\Omega\to\infty} {1\over \Omega} \left[\Bigl< \left(\sum_{\rho} \mu_{\rho} Z_{\rho} (\Gamma ) \right)^2\Bigr> - \langle \sum_{\rho} \mu_{\rho} Z_{\rho} (\Gamma ) \rangle^2  \right] \nonumber \\
                &=\lim_{\tau\to\infty}{1\over \tau} \sum_{\rho, \rho'} \mu_{\rho} \langle\!\langle w_{\rho} (\tau) w_{\rho'} (\tau )\rangle\!\rangle \mu_{\rho'} \, ,  \label{sm:varsigmaformalformula}
\end{align}
where we use the fact that the path-probability average is replaced by the noise average in the Langevin picture. The covariance matrix with respect to the variables ${\bm w}$ is given as
\begin{align}
\lim_{\tau \to \infty}\frac{1}{\tau} \langle\!\langle w_{\rho}(\tau ) w_{\rho'}(\tau ) \rangle\!\rangle &= \lim_{\tau \to \infty}\frac{1}{\tau}\int_{0}^{\tau} d s \sum_{\rho''}G_{\rho,\, \rho''}(\tau, s) \,G_{\rho' ,\, \rho''}(\tau, s) \nonumber \\
&=\lim_{\tau \to \infty}\frac{1}{\tau}\int_{0}^{\tau} d s \sum_{\rho''}
\left[ \int_s^{\tau} dt_1 \sum_{\ell, \ell'} {\partial J_{\rho }({\bm x}(t_1) ) \over \partial x_{\ell}}\Phi_{\ell,\ell'} (t_1 , s) \nabla^{\ell'}_{\rho ''} + \delta_{\rho,\, \rho''} \right]
\nonumber \\
&~~~~~~~~~~~\times\mathcal{A}_{\rho''}(\bm{x}(s))  \left[ \int_s^{\tau} dt_2 \sum_{m,m'} {\partial J_{\rho'}({\bm x}(t_2) ) \over \partial x_{m}}\Phi_{m,m'} (t_2 , s) \nabla^{m'}_{\rho''} +  \delta_{\rho', \,\rho''}\right] \, . 
\end{align}
After manipulation, we arrive at the following formula when the stable steady solution of Eq.  (\ref{macro_deterministic_eqs}) is a fixed point such as in pitchfork, transcritical, and saddle-node bifurcations:
\begin{align}
\begin{split}
{\rm Var}\sigma &= \sum_{\rho\in {\cal R}} v_{\rho}^2(\bar{\bm x}) {\cal A}_{\rho}(\bar{\bm x})\, , ~~~~~~~~~~~~~~~~[{\rm Fixed~point~case}]  \label{sm:fpformula}\\ 
v_{\rho} &= \mu_{\rho} -\sum_{\ell, \ell' \in {\cal S}} {\partial_{\bar{x}_{\ell}} \sigma (\bar{\bm x})} [{\mathbb S}^{-1} (\bar{\bm x} )]_{\ell, \ell'} \nabla_{\rho}^{\ell '} \, , 
\end{split}
\end{align}
where $\bar{\bm{x}}$ is the stable fixed point solution in Eq.(\ref{macro_deterministic_eqs}), and $\mathbb{S} (\bar{\bm{x}})$ is the stability matrix (Jacobian) evaluated at the fixed point, with elements $[\mathbb{S} (\bar{\bm{x}})]_{\ell, \ell'} = \partial F_{\ell} / \partial x_{\ell'} |_{\bm{x}=\bar{\bm{x}}}$. 
Conversely, when the stable steady solution of (\ref{macro_deterministic_eqs}) is a limit cycle (LC), typically arising from a Hopf bifurcation, the formula is given by:
\begin{align}
\begin{split}
{\rm Var}\sigma &= \! \lim_{\tau\to\infty}\!{1\over \tau}\! \int_{0}^{\tau} \!\! ds \! \sum_{\rho}\! v_{\rho}^2(\bar{\bm x}(s)) {\cal A}_{\rho}(\bar{\bm x}(s))\,,~~~~~~~[{\rm Limit~cycle~case}] \, , \label{sm:lcformula}\\ 
v_{\rho} (\bar{\bm x}(s)) &=\!\mu_{\rho} + \!\!\sum_{\ell, \ell' } \!\int_s^{\tau} \!\! dt \partial_{\bar{x}_{\ell}} \sigma(\bar{\bm x}(s)) \Phi_{\ell, \ell'} (t,s) \nabla_{\rho}^{\ell'} \, ,
\end{split} 
\end{align}
where $\bar{\bm x}(s)$ is a periodic solution (limit cycle) of Eq.(\ref{macro_deterministic_eqs}). 

Note that the term $ {\cal A}_{\rho}$ never diverges due to the Property 1 in Sec.\ref{sm:pro}. Divergence of the fluctuation emerges from the term $v_{\rho}$. 

\section{Classification of diverging exponents in entropy fluctuation}\label{classification_of_entropyfluctuation}
In the chemical reaction, we consider the control parameter $\theta$ defined through the 
concentration of chemostatted species:
\begin{align}
{\bm a} &= {\bm a}^{(0)}+\theta \hat{\bm a} \, , \label{sm:defoftheta}
\end{align}
where $\hat{\bm a}=(\hat{a}_{\ell})_{\ell\in {\cal S}_c}$ is the unit vector which defines the increasing direction of the concentrations of the chemostatted chemical species from the initial concentration vector ${\bm a}^{(0)}$. The parameter $\theta$ is the amplitude along the direction. The parameter $\theta$ is a control parameter to govern the bifurcations. The final aim here is to classify the exponent of the divergence of the fluctuation of the entropy production according to the types of bifurcation:
\begin{align}
{\rm Var}\sigma &\propto
\left\{
\begin{array}{ll}
(\theta - \theta_c)^{-\alpha_+} & \theta > \theta_c \\
(\theta_c - \theta )^{-\alpha_-} & \theta < \theta_c \\
\end{array} 
\right.
\end{align}
Note that exponents $\alpha_{\pm}$ does not depend on the choice of the functional form on the parameter $\theta$, as long as the function is differentiable around the bifurcation point.

\subsection{Typical bifurcations and their normal forms: A review}\label{sm:bifurcationnormalform}
Before classifying the diverging exponents at bifurcation points, we briefly review the local dynamical structure near several standard bifurcations. Let us consider an $n$-dimensional state vector ${\bm x}$ obeying the deterministic dynamics
\[
\dot{\bm x}={\bm F}_{\theta}({\bm x}) \, .
\]
The subscript $\theta$ emphasizes that the vector field depends on a control parameter $\theta$. The linearized dynamics around a solution ${\bm x}(t)$ is governed by the stability matrix $\mathbb{S}_{\theta}({\bm x}(t))$:
\begin{align}
\delta \dot{\bm x}(t)
&=
\mathbb{S}_{\theta}({\bm x}(t)) \, \delta {\bm x}(t) \, ,
\qquad
\mathbb{S}_{\theta}({\bm x}(t))
:=
\left.
\frac{\partial {\bm F}_{\theta}}{\partial {\bm x}}
\right|_{{\bm x}={\bm x}(t)} .
\label{sm:jacobiandef}
\end{align}
Let $\theta_c$ be a bifurcation point, and let ${\bm x}_c$ be an equilibrium (stable fixed point solution) at $\theta=\theta_c$, namely
\[
{\bm F}_{\theta_c}({\bm x}_c)={\bm 0}.
\]
The eigenvalues of the linearization ${\mathbb S}_{\theta_c}({\bm x}_c)$ determine the decomposition of the tangent space
\begin{align}
T_{{\bm x}_c}\mathbb{R}^n
&=
E^{\rm c}
\oplus
E^{\rm s}
\oplus
E^{\rm u} \, ,
\end{align}
where $E^{\rm c}$ is the center subspace associated with eigenvalues of zero real part, and $E^{\rm s}$ and $E^{\rm u}$ denote the stable and unstable subspaces, respectively. For $\theta$ close to $\theta_c$, the local dynamics relevant to the bifurcation is governed by the flow restricted to the center manifold. The {\it Center Manifold Theorem} \cite{Kuznetsov2023Elements} guarantees the existence of a locally invariant center manifold tangent to $E^{\rm c}$ at ${\bm x}_c$.

Let ${\bm e}_1^{\rm c}, \cdots , {\bm e}_{d_{\rm c}}^{\rm c}$ be basis vectors for the center subspace $E^{\rm c}$, and let ${\bm e}^{\rm su}_1, \cdots , {\bm e}^{\rm su}_{d_{\rm su}}$ be basis vectors for the stable and unstable subspaces $E^{\rm s}\oplus E^{\rm u}$. Here, $d_{\rm c}$ and $d_{\rm su}$ are the dimensions of $E^{\rm c}$ and $E^{\rm s}\oplus E^{\rm u}$, respectively, and hence $d_{\rm c}+d_{\rm su}=n$. These vectors are not necessarily orthogonal. We construct the matrix ${\cal P}$ by taking these vectors as its columns:
\begin{align}
{\cal P}
&=
({\bm e}_1^{\rm c}, \cdots , {\bm e}_{d_{\rm c}}^{\rm c}, {\bm e}^{\rm su}_1, \cdots , {\bm e}^{\rm su}_{d_{\rm su}}) \, .
\end{align}
Then, for $\theta$ near $\theta_c$, we introduce linear coordinates $({\bm c}^T,{\bm s}^T)^T$ by
\begin{align}
{\bm x}-{\bm x}_c
&=
{\cal P}
\left(
\begin{array}{c}
{\bm c}\\
{\bm s}
\end{array}
\right) \, ,
\label{sm:xandzeta}
\end{align}
where ${\bm c}=(c_1,\cdots,c_{d_{\rm c}})^T$ parameterizes the center directions, while ${\bm s}=(s_1,\cdots,s_{d_{\rm su}})^T$ parameterizes the stable and unstable directions.

Introducing the unfolding parameter
\[
\mu := \theta-\theta_c \, ,
\]
the {\it Center Manifold Theorem} guarantees that, for $|\mu|\ll 1$, the center manifold can be represented locally as a graph
\[
{\bm s}={\bm h}({\bm c},\mu)
\]
with ${\bm h}({\bm 0},0)={\bm 0}$ and $\partial_{\bm c}{\bm h}({\bm 0},0)={\bm 0}$. Accordingly, the variables ${\bm c}$ and ${\bm s}$ obey equations of the form
\begin{align}
\dot{\bm c} &= {\bm g}_{\rm c}({\bm c},{\bm s},\mu) \, ,\\
\dot{\bm s} &= {\bm g}_{\rm su}({\bm c},{\bm s},\mu) \, ,
\end{align}
and the reduced dynamics on the center manifold is
\[
\dot{\bm c}={\bm g}_{\rm c}({\bm c},{\bm h}({\bm c},\mu),\mu) \, .
\]

To flatten the center manifold, we introduce the nonlinear coordinate transformation
\[
{\bm w}={\bm s}-{\bm h}({\bm c},\mu) \, .
\]
With this transformation, the center manifold is given by ${\bm w}={\bm 0}$. 
When the system starts at the initial condition $({\bm c}(0),{\bm w}(0)={\bm 0})$, then the state at time $t$ remains in the manifold $({\bm c}(t),{\bm w}(t)={\bm 0})$ for any time $t$. We define the flattened coordinates by ${\bm \zeta}=({\bm c}^T,{\bm w}^T)^T$. Then
\[
{\bm x}
=
{\bm x}_c
+
{\cal P}
\left(
\begin{array}{c}
{\bm c}\\
{\bm w}+{\bm h}({\bm c},\mu)
\end{array}
\right),
\]
and the Jacobian matrix of this coordinate map is
\begin{align}
{\cal J}({\bm c},\mu)
&:=
\frac{\partial {\bm x}}{\partial {\bm \zeta}}
=
{\cal P}
\left(
\begin{array}{cc}
{\mathbb I} & {\bm 0}\\
\partial_{\bm c}{\bm h}({\bm c},\mu) & {\mathbb I}
\end{array}
\right) \, .
\end{align}
At $({\bm c},\mu)=({\bm 0},0)$, it coincides with ${\cal P}$. Let $\tilde{\bm F}_{\theta}({\bm \zeta})$ denote the transformed vector field in the flattened coordinates satisfying  
\begin{align}
\dot{\bm \zeta}
=\tilde{\bm F}_{\theta}({\bm c},{\bm w}) \, . \label{sm:flattenddetdynam}
\end{align}
We also define its stability matrix as
\begin{align}
\tilde{\mathbb S}_{\theta}({\bm \zeta}) 
=\tilde{\mathbb S}_{\theta}({\bm c},{\bm w}) 
&:=
\partial_{\bm \zeta}\tilde{\bm F}_{\theta}({\bm c}, {\bm w}) \, . 
\end{align}
 Since the set ${\bm w}={\bm 0}$ is invariant, the lower component of the transformed vector field satisfies $\tilde{\bm F}_{\theta}^{({\rm su})}({\bm c},{\bm 0})={\bm 0}$ for
$\forall {\bm c}$. Therefore, the stability matrix along the center manifold has the block upper-triangular form
\begin{align}
\tilde{\mathbb S}_{\theta}({\bm c},{\bm w}={\bm 0})
&=
\left(
\begin{array}{cc}
\tilde{\mathbb S}_{\theta}^{({\rm c})}({\bm c})
&
\tilde{\mathbb S}_{\theta}^{({\rm csu})}({\bm c})
\\
{\bm 0}
&
\tilde{\mathbb S}_{\theta}^{({\rm su})}({\bm c})
\end{array}
\right) \, .
\end{align}
Moreover, if
\[
\bar{\bm \zeta}=
\left(
\begin{array}{c}
\bar{\bm c}\\
{\bm 0}
\end{array}
\right)
\]
is an equilibrium of the dynamics (\ref{sm:flattenddetdynam}), then the corresponding equilibrium of the original system is
\[
\bar{\bm x}
=
{\bm x}_c
+
{\cal P}
\left(
\begin{array}{c}
\bar{\bm c}\\
{\bm h}(\bar{\bm c},\mu)
\end{array}
\right),
\]
and the two stability matrices are related to each other as
\begin{align}
{\mathbb S}_{\theta}(\bar{\bm x})
&=
{\cal J}(\bar{\bm c},\mu)\,
\tilde{\mathbb S}_{\theta}(\bar{\bm c},{\bm w}={\bm 0})\,
{\cal J}(\bar{\bm c},\mu)^{-1} \, .
\end{align}
Hence, the eigenvalues at equilibria on the center manifold are determined by the diagonal blocks $\tilde{\mathbb S}_{\theta}^{({\rm c})}$ and $\tilde{\mathbb S}_{\theta}^{({\rm su})}$. The eigenvalues associated with $E^{\rm s}\oplus E^{\rm u}$ remain $O(1)$ and stay away from the imaginary axis near the bifurcation.

In the case of $d_{\rm c}=1$, the reduced dynamics on the center manifold is one-dimensional. Then ${\bm c}$ becomes a scalar $c$, and the eigenvalue associated with the center direction converges to $0$ as $\theta\to\theta_c$. Expanding the reduced vector field in powers of $c$ and $\mu$, we obtain
\[
\dot{c}
=
g_{\rm c}(c,\mu)
=
a_{10}\mu
+
a_{01}c
+
a_{11}\mu c
+
a_{02}c^2
+
a_{03}c^3
+
\cdots .
\]
At the bifurcation point, $a_{01}=\partial_c g_{\rm c}(0,0)=0$. The bifurcation type is determined by the lowest-order nonvanishing nonlinear term together with the nondegeneracy conditions with respect to the unfolding parameter. After smooth coordinate changes and rescalings of $c$, $\mu$, and time, one obtains the standard normal forms. \\
\noindent
If an additional $\mathbb{Z}_2$ symmetry $c\mapsto -c$ is present, then the reduced vector field is odd in $c$, namely $g_{\rm c}(-c,\mu)=-g_{\rm c}(c,\mu)$, so the constant and even-order terms vanish. If, moreover, the coefficients of $\mu c$ and $c^3$ are nonzero, a pitchfork bifurcation occurs. For definiteness, the supercritical normal form is written as
\begin{align}
\dot{c}
&=
g_{\rm c}(c,\mu)
=
\mu c-c^3 .
\label{sm:pitchfork}
\end{align}
If $a_{10}=0$, $a_{11}\neq 0$, and $a_{02}\neq 0$, a transcritical bifurcation occurs:
\begin{align}
\dot{c}
&=
g_{\rm c}(c,\mu)
=
\mu c-c^2 .
\label{sm:transcritical}
\end{align}
If $a_{10}\neq 0$ and $a_{02}\neq 0$, a saddle-node bifurcation occurs:
\begin{align}
\dot{c}
&=
g_{\rm c}(c,\mu)
=
\mu-c^2 .
\label{sm:saddlenode}
\end{align}

In the case of $d_{\rm c}=2$, the center block of the linearization has a pair of complex conjugate eigenvalues
\begin{align}
\lambda_{1,2}^{\rm c}(\theta)
&=
\alpha(\theta)
\pm i\omega(\theta) \, ,
\label{sm:hopfeigen}
\end{align}
such that $\alpha(\theta_c)=0$, $\omega(\theta_c)=\omega_0\neq 0$, and $\frac{d\alpha}{d\theta}(\theta_c)\neq 0$, while all other eigenvalues have negative real parts. By the Center Manifold Theorem, the dynamics reduces locally to a two-dimensional system. After smooth coordinate transformations and rescalings, the reduced system can be written in complex form as
\[
\dot{z}
=
(\mu+i\omega_0)z
-
\gamma z|z|^2
+ \cdots \, , 
\]
where $z=c_1+ic_2\in\mathbb{C}$, $\mu=\theta-\theta_c$, and $\gamma\in\mathbb{R}$ determines the criticality (for simplicity, we suppress the nonlinear frequency shift). Truncating at cubic order, we obtain the Hopf normal form in the two variables $c_1$ and $c_2$:
\begin{equation}
\begin{split}
\dot{c}_1
&=
g_{{\rm c},1}(c_1,c_2)
=
\mu c_1
-
\omega_0 c_2
-
\gamma c_1(c_1^2+c_2^2) + \cdots \, , \\
\dot{c}_2
&=
g_{{\rm c},2}(c_1,c_2)
=
\omega_0 c_1
+
\mu c_2
-
\gamma c_2(c_1^2+c_2^2) + \cdots \, .
\end{split}
\label{sm:hopf}
\end{equation}
Writing $z=re^{i\phi}$, we obtain the equations
\begin{align}
\dot{r}
&=
\mu r-\gamma r^3 + O(\mu^2 r  + \mu r^3 + r^5),
\qquad
\dot{\phi}
=
\omega_0 + O(\mu + r^2).
\end{align}
If $\gamma>0$, the Hopf bifurcation is supercritical, and a stable limit cycle of radius emerges for $\mu>0$ (equivalently, $\theta>\theta_c$):
\begin{align}
r&=\sqrt{\mu/\gamma}\, , ~~~T(\mu) = {2\pi \over \omega_0} + O(\mu) \, , \label{sm:randT}
\end{align}
where $T(\mu)$ is the period of the limit cycle. If $\gamma<0$, the Hopf bifurcation is subcritical, and an unstable limit cycle exists for $\mu<0$ (equivalently, $\theta<\theta_c$).

\subsection{Fluctuation of the entropy production for fixed-point cases}\label{fixpoint}
Let us discuss the generic properties for the case $d_{\rm c}=1$, where the center manifold is one-dimensional. In this case, the relevant stable states are equilibria, and hence we use the formula for the fixed-point case in (\ref{sm:fpformula}). We write
$\mu := \theta-\theta_c$ for the unfolding parameter. Let $\bar{\bm x}_{\pm}(\theta)$ denote a stable equilibrium approached from the sides $\theta\to\theta_c\pm 0$, respectively, whenever the corresponding limit exists. We note each element of the entropy gradient vector $\frac{\partial \sigma}{\partial x_{\ell}}$ never diverges due to the property 3 in Sec.\ref{sm:pro}. We consider the {\it generic} situation where the entropy-gradient vector has a finite, nonzero overlap with the center direction. 
In the generic case, the divergence of the fluctuation of entropy production is controlled by the singular part of the inverse stability matrix along the center-manifold equilibrium. Let
\[
\bar{\bm \zeta}=
\left(
\begin{array}{c}
\bar{c} \\
0
\end{array}
\right)
\]
be a stable equilibrium in the flattened coordinates, and let $\bar{\bm x}$ be the corresponding equilibrium in the original coordinates. Then, from the block upper-triangular structure of the transformed stability matrix on the center manifold, we obtain
\begin{align}
[{\mathbb S}_{\theta}(\bar{\bm x})]^{-1}
&=
{\cal J}(\bar{c},\mu)
\tilde{\mathbb S}_{\theta}(\bar{\bm \zeta})^{-1}
[{\cal J}(\bar{c},\mu)]^{-1}
\nonumber \\
&=
{\cal J}(\bar{c},\mu)
\left(
\begin{array}{cc}
\lambda^{-1}
&
-\lambda^{-1}
\tilde{\mathbb S}_{\theta}^{({\rm csu})}(\bar{c})
[ \tilde{\mathbb S}_{\theta}^{({\rm su})}(\bar{c}) ]^{-1}
\\
{\bm 0}
&
[ \tilde{\mathbb S}_{\theta}^{({\rm su})}(\bar{c}) ]^{-1}
\end{array}
\right)
[{\cal J}(\bar{c},\mu)]^{-1} \, ,
\end{align}
where
\[
\lambda
:=
\tilde{\mathbb S}_{\theta}^{({\rm c})}(\bar{c},\mu)
=
\partial_c g_{\rm c}(\bar{c},\mu)
\]
is the center eigenvalue at the equilibrium. Near the bifurcation point, the diverging contribution comes from $\lambda^{-1}$ because $\lambda\to 0$ as $\theta\to\theta_c$, while the elements of $\tilde{\mathbb S}_{\theta}^{({\rm csu})}$ and the eigenvalues of $\tilde{\mathbb S}_{\theta}^{({\rm su})}$ remain $O(1)$. Let $\lambda_{+}$ and $\lambda_{-}$ denote the values of $\lambda$ on the stable branches for $\mu>0$ and $\mu<0$, respectively. In the generic case described above, the fluctuation of entropy production scales as
\[
\mathrm{Var}\,\sigma \propto \lambda_{\pm}^{-2}
\]
on each side where a stable equilibrium branch exists. We now classify this generic behavior for each bifurcation class.

\subsubsection{Pitchfork and transcritical bifurcations}
For the pitchfork bifurcation, we use the normal form (\ref{sm:pitchfork}),
\[
\dot c = \mu c-c^3 \, .
\]
The stable equilibria are given by $\bar{c}=\pm\sqrt{\mu}$ for $\mu>0$ and $\bar{c}=0$ for $\mu<0$. Hence,
\begin{align}
\lambda_{+}
&=
\partial_c g_{\rm c}(\bar{c},\mu)
=
\mu-3\bar{c}^2
=
-2\mu
\qquad (\mu>0) \, ,
\\
\lambda_{-}
&=
\partial_c g_{\rm c}(0,\mu)
=
\mu
=
-|\mu|
\qquad (\mu<0) \, .
\end{align}
Consequently, the generic behavior of the fluctuation of entropy production is
\begin{align}
\mathrm{Var}\,\sigma
&\propto
\begin{cases}
\dfrac{C_{+}}{\mu^2}
& \cdots \ \mu>0 \, ,\\[1.2ex]
\dfrac{C_{-}}{\mu^2}
& \cdots \ \mu<0 \, ,
\end{cases}
\label{sm:var_pitchfork}
\end{align}
where $C_{\pm}$ are finite constants. Thus, the exponents are $(\alpha_+,\alpha_-)=(2,2)$ for the pitchfork bifurcation. We have some remark on the symmtery in the pictchfork bifurcation. 
The normal form has $c\to -c$ symmetry. However, we emphasize that this symmetry does not imply $\sigma(c)=\sigma(-c)$. The entropy production is a physical observable expressed in the original concentrations and reservoir affinities, and it generally contains an odd component in the center-manifold coordinate, $\sigma(c)=\sigma_0+s_1c+s_2c^2+\cdots$ with $s_1\neq0$. Only when the full chemical reaction network has an additional exact symmetry that exchanges the two branches and leaves the environment entropy production invariant is the odd coefficient $s_1$ forbidden. Such symmetry-protected cases are nongeneric in the present sense and may exhibit reduced exponents. 

We next consider the transcritical bifurcation in the same manner. We use the normal form (\ref{sm:transcritical}),
\[
\dot c = \mu c-c^2 \, ,
\]
which has the stable equilibria $\bar{c}=\mu$ for $\mu>0$ and $\bar{c}=0$ for $\mu<0$. Therefore,
\begin{align}
\lambda_{+}
&=
\partial_c g_{\rm c}(\mu,\mu)
=
\mu-2\mu
=
-\mu
\qquad (\mu>0) \, ,
\\
\lambda_{-}
&=
\partial_c g_{\rm c}(0,\mu)
=
\mu
=
-|\mu|
\qquad (\mu<0) \, .
\end{align}
Hence, we arrive at the same type of divergence for the fluctuation of entropy production as in (\ref{sm:var_pitchfork}). Therefore, the exponents are again $(\alpha_+,\alpha_-)=(2,2)$ for the transcritical bifurcation.

The pitchfork and transcritical bifurcations are both continuous transitions in the sense that the stable equilibria change continuously as a function of $\theta$. In this generic setting, the continuous bifurcation cases show the same exponents on both sides, namely $(\alpha_{+},\alpha_{-})=(2,2)$.

\subsubsection{Saddle-node bifurcation}
In contrast to the pitchfork and transcritical bifurcations, the saddle-node bifurcation shows a sudden loss of the local stable equilibrium branch. Using the normal form (\ref{sm:saddlenode}),
\[
\dot c = \mu-c^2 \, ,
\]
one finds the stable equilibrium $\bar{c}=\sqrt{\mu}$ for $\mu>0$. Therefore,
\begin{align}
\lambda_{+}
&=
\partial_c g_{\rm c}(\bar{c},\mu)
=
-2\bar{c}
=
-2\sqrt{\mu}
\qquad (\mu>0) \, .
\end{align}
Hence, the generic divergence of the entropy-production fluctuation on the side where the stable equilibrium exists is
\begin{align}
\mathrm{Var}\,\sigma
&\propto
\frac{C''_{+}}{\mu}
\qquad (\mu>0) \, ,
\end{align}
where $C''_{+}$ is a finite constant.

For $\mu<0$, the local normal form has no nearby equilibrium, so the center-manifold analysis does not provide a universal singular contribution on that side. Any behavior for $\theta<\theta_c$ is determined by the global structure of the full system.
Therefore, from the local bifurcation analysis, one concludes the generic exponents $(\alpha_+,\alpha_-)=(1,\varnothing)$ for the saddle-node bifurcation. Here, $\varnothing$ implies that the exponent depends on the specific models. 

The saddle-node bifurcation is discontinuous in the sense that the local stable equilibrium branch terminates at the bifurcation point. This is reminiscent of a first-order equilibrium phase transition. However, unlike the equilibrium first-order case, the bifurcation can still show one-sided scaling behavior through the vanishing center eigenvalue. We also remark on the difference from the argument in Ref.~\cite{NguyenPhysRevE.102.022101Schlogl}, where exponentially large fluctuations are observed for finite-volume chemical reactions. As emphasized before, we focus only on the thermodynamic limit (\ref{sp:varsigmaexpression}). Hence, our target quantity is different from the one considered in Ref.~\cite{NguyenPhysRevE.102.022101Schlogl}.


\subsection{Fluctuation of the entropy production for limit-cycle case}\label{hopf}
In the Hopf bifurcation case, we consider the formula in (\ref{sm:lcformula}) for 
time-periodic limit cycle in the regime $\theta > \theta_c$, while we use (\ref{sm:fpformula}) for the regime $\theta < \theta_c$.

\subsubsection{The case of $\theta > \theta_c$}
We here discuss the divergence behavior for the regime $\mu:=\theta - \theta_c>0$ near the Hopf bifurcation point. Since the periodic motion appears in this regime, we use the monodromy matrix ${\bm M}_{\theta}$ defined as
\begin{align}
{\bm M}_{\theta} &= {\cal T} \exp \Bigl[\int_0^T dt \, {\mathbb S}_{\theta} (\bar{\bm x} (t) ) \Bigr] \, , 
\end{align}
where $\bar{\bm x} (t)$ is the stable periodic solution of Eq.(\ref{macro_deterministic_eqs}). $T$ is the period of the limit cycle which depends on the parameter $\theta$ as
\begin{align}
T &= {2\pi \over \omega_0} + O(\mu) \, . \label{sm:period}
\end{align}
See Eq.(\ref{sm:randT}) in the review part of the bifurcations in Sec.\ref{sm:bifurcationnormalform}. We define the left and right eigenvectors for the monodromy matrix ${\bm M}_{\theta}$:
\begin{align}
    {\bm M}_{\theta} \bm{v}_j & = e^{\nu_j T} \bm{v}_j \, , \quad \bm{w}^\dagger_j {\bm M}_{\theta} = e^{\nu_j T} \bm{w}^\dagger_j \, .
\end{align} 
We can show the following lemma for the monodromy matrix, the proof of which is provided in the sec. \ref{sm:proofsection} below.  
\begin{lemma}\label{lemma1}
Assume a generic supercritical Hopf bifurcation at $\theta=\theta_c$, and let
$\mu=\theta-\theta_c>0$.
Then the Floquet exponents of the limit cycle born from the Hopf bifurcation satisfy
\begin{align}
\begin{split}
\nu_1&=0,\qquad \nu_2=-2\mu+O(\mu^2), \\
\nu_j&=\lambda_j(\theta_c)+O(\mu),\qquad j=3,\dots,n,
\end{split}
\end{align}
where $\Re \lambda_j(\theta_c)\neq 0$ for $j\ge 3$.
Namely, one exponent is zero, one is of order $O(\theta-\theta_c)$,
and the remaining $n-2$ exponents are of order $O(1)$.
\end{lemma}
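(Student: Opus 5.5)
We work in the flattened center-manifold coordinates $\bm\zeta=(\bm c^T,\bm w^T)^T$ introduced in Sec.~\ref{sm:bifurcationnormalform}. Floquet exponents are invariant under smooth changes of coordinates: the monodromy matrix transforms by conjugation with ${\cal J}(\bar{\bm c}(0),\mu)$, and this Jacobian is the same at $t=0$ and $t=T$ because the orbit is periodic. It therefore suffices to compute the spectrum of the monodromy matrix of $\dot{\bm\zeta}=\tilde{\bm F}_\theta(\bm\zeta)$ along the periodic orbit $(\bar{\bm c}(t),\bm 0)$. Along $\bm w=\bm 0$ the stability matrix is block upper triangular. Hence the fundamental matrix, and therefore the monodromy matrix, is block upper triangular, with diagonal blocks equal to the time-ordered exponentials of $\tilde{\mathbb S}^{({\rm c})}_\theta(\bar{\bm c}(t))$ and $\tilde{\mathbb S}^{({\rm su})}_\theta(\bar{\bm c}(t))$. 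The Floquet spectrum thus splits into a two-dimensional center part and an $(n-2)$-dimensional hyperbolic part, and we treat them separately.

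\emph{Hyperbolic part.} The orbit has amplitude $|\bar{\bm c}(t)|=O(\sqrt\mu)$, and $\tilde{\mathbb S}^{({\rm su})}_\theta$ is smooth in $(\bm c,\mu)$. Hence $\tilde{\mathbb S}^{({\rm su})}_\theta(\bar{\bm c}(t))=\tilde{\mathbb S}^{({\rm su})}_{\theta_c}(\bm 0)+O(\sqrt\mu)$ uniformly in $t$. Since $T=2\pi/\omega_0+O(\mu)$ by Eq.~(\ref{sm:period}), the su-block of the monodromy matrix is $\exp[T\,\tilde{\mathbb S}^{({\rm su})}_{\theta_c}(\bm 0)]$ plus a small perturbation. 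Eigenvalue perturbation then gives $\nu_j=\lambda_j(\theta_c)+o(1)$ with $\Re\lambda_j\neq0$.

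To upgrade $O(\sqrt\mu)$ to $O(\mu)$, we use the fact that the $O(\sqrt\mu)$ correction is linear in $\bar{\bm c}(t)$, which is a first-harmonic function of $\omega_0 t$. Its first-order contribution to the Floquet exponent is the time average of the diagonal projection $\bm w_j^\dagger(\cdot)\bm v_j$. When $\lambda_j$ is non-resonant with $\pm i\omega_0$ (a genericity assumption), this average vanishes, leaving $O(\mu)$. A cleaner alternative is a near-identity averaging (normal-form) transformation that removes the first harmonics.

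\emph{Center part.} Here we use the normal form (\ref{sm:hopf}) in polar coordinates, $\dot r=\mu r-\gamma r^3+\cdots$ and $\dot\phi=\omega_0+O(\mu+r^2)$. One Floquet exponent is exactly zero: differentiating $\dot{\bar{\bm x}}=\bm F(\bar{\bm x})$ in time shows that $\dot{\bar{\bm x}}(t)$ solves the variational equation and is $T$-periodic, so $\bm M_\theta\dot{\bar{\bm x}}(0)=\dot{\bar{\bm x}}(0)$, giving $\nu_1=0$. For the other center exponent we use Liouville's formula: the determinant of the center block of the monodromy matrix equals $\exp\int_0^T{\rm tr}\,\tilde{\mathbb S}^{({\rm c})}_\theta(\bar{\bm c}(t))\,dt$, so that
\begin{align}
\nu_1+\nu_2=\frac1T\int_0^T{\rm tr}\,\tilde{\mathbb S}^{({\rm c})}_\theta(\bar{\bm c}(t))\,dt \, .
\end{align}
From the cubic normal form, ${\rm tr}\,\tilde{\mathbb S}^{({\rm c})}=2\mu-4\gamma r^2+\cdots$. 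Substituting $r^2=\mu/\gamma+O(\mu^2)$ from Eq.~(\ref{sm:randT}) gives $\nu_2=-2\mu+O(\mu^2)$ after the rescaling $\gamma=1$.

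The main obstacle is controlling the remainders. For $\nu_2$, the higher-order normal-form terms ($O(\mu^2 r+\mu r^3+r^5)$, nonlinear frequency shift, and the coordinate changes bringing the system to normal form) must only contribute $O(\mu^2)$ to the averaged trace. An independent check comes from the radial equation: $\nu_2$ equals $\partial_r\dot r$ at $r^2=\mu/\gamma$, namely $\mu-3\gamma r^2=-2\mu$, up to higher order. This is valid because the phase-averaged Poincaré map is what gives the radial linearization. A further genericity issue is the $O(\mu)$ claim for the hyperbolic exponents, which requires the resonance-free assumption (or an averaging argument) discussed above.
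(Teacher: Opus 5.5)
Your proposal is correct and follows the paper's skeleton: flattened coordinates, block upper-triangular monodromy from invariance of $\bm w=\bm 0$, conjugation by ${\cal J}(0)={\cal J}(T)$, the periodic tangent vector for $\nu_1=0$, and a uniform $O(\sqrt\mu)$ perturbation of the su block. It differs in two places.

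\textbf{The exponent $\nu_2$.} The paper linearizes the radial normal-form equation, $\mu-3\gamma\bar r^2+\partial_rR_r=-2\mu+O(\mu^2)$. You instead use Liouville's formula on the center block. Your route has a real merit: $T^{-1}\int_0^T{\rm tr}\,\tilde{\mathbb S}^{({\rm c})}_\theta\,dt=T^{-1}\ln\det\tilde{\bm M}^{({\rm c})}_\theta$ is invariant under the periodic change from the flattened $\bm c$ to normal-form coordinates. It also does not need the $(r,\phi)$ linearization to be triangular.

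You should state that invariance explicitly. The pointwise trace is not invariant, and you evaluate it in the normal form rather than in $\bm c$.

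The remainder control you flag as the main obstacle is then immediate. With $R_r,R_\phi$ depending only on $(r,\mu)$, the Cartesian divergence equals $\dot r/r+\partial_r\dot r$, and the first term vanishes on the cycle. Your computation therefore reduces exactly to the paper's. The rescaling $\gamma=1$ is also unnecessary, since $2\mu-4\gamma\bar r^2=-2\mu$ for any $\gamma>0$.

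You should also say why the multiplier $1$ belongs to the center block. One reason is that $\dot{\bar{\bm\zeta}}=(\dot{\bar{\bm c}},\bm 0)$ lies in the invariant center subspace.

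\textbf{The exponents $\nu_j$, $j\ge3$.} The paper's own proof stops at $\nu_j=\lambda_j(\theta_c)+o(1)$. Your first-harmonic argument is therefore a genuine addition toward the stated $O(\mu)$, but its logic should be phrased differently.

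Write $A_{\rm su}(t)=A_{{\rm su},0}+\sqrt\mu\,A_1(t)+O(\mu)$. Let $\bm v_j,\bm w_j^\dagger$ be the eigenvectors of $A_{{\rm su},0}$, and let $\langle\cdot\rangle_t$ be the average over $T_0=2\pi/\omega_0$. The first-order shift $\sqrt\mu\,\langle\bm w_j^\dagger A_1(t)\bm v_j\rangle_t$ vanishes unconditionally for first-harmonic $A_1$. Non-resonance is not what makes it vanish; it is what licenses non-degenerate perturbation theory in the first place.

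The relevant condition is $\lambda_j-\lambda_k\neq\pm i\omega_0$ among the su eigenvalues. If it fails, then $e^{\lambda_jT_0}=e^{\lambda_kT_0}$, and the off-diagonal averages $\langle e^{(\lambda_k-\lambda_j)t}\,\bm w_j^\dagger A_1(t)\bm v_k\rangle_t$ can split the multipliers at order $\sqrt\mu$. Under this generic condition, the second-order terms together with $T-T_0=O(\mu)$ give the claimed $O(\mu)$.
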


The properties $\bm{w}_j^\dagger \bm{v}_m = \delta_{jm}$ and $\sum_{j}  \bm{v}_j \bm{w}^\dagger_j  = {\mathbb I}$ are satisfied. Then, we can write $\Phi_{\theta} (t,0) = \sum_{j} e^{\nu_j t} \tilde{\bm{v}}_j(t) \bm{w}^\dagger_j$, where $\tilde{\bm{v}}_j(t) := e^{- \nu_j t} \Phi_{\theta}(t, 0) \bm{v}_j$. To correctly decompose the matrix as $\Phi_{\theta}(t,s) = \Phi_{\theta}(t,0)\Phi_{\theta}(0,s)$, we define the periodic left eigenvectors as $\tilde{\bm{w}}^\dagger_j(t) := e^{\nu_j t} \bm{w}^\dagger_j \Phi_{\theta}(0,t)$. Note that $\tilde{\bm{v}}_j(t)$ and $\tilde{\bm{w}}^\dagger_j(t)$ are both periodic functions with the period $T$. One can then write the matrix $\Phi_{\theta}(t, s)$ as
\begin{align}
    \Phi_{\theta}(t, s) & =  \sum_{j=1}^n \Phi_{\theta,j}(t, s) \,, \quad \Phi_{\theta,j} (t,s) = e^{\nu_j (t - s)} \tilde{\bm{v}}_j(t) \tilde{\bm{w}}^\dagger_j(s) \, .
\end{align}


In order to investigate ${\rm Var}\sigma$, let us focus on the term $v_{\rho} - \mu_{\rho}$ in the formula (\ref{sm:lcformula}):
\begin{align}
 v_{\rho}(\bar{\bm x}( s)) - \mu_{\rho} & \equiv \int_{s}^{\tau} d t \sum_{\ell, \ell'}{\partial \sigma (\bm{x} (t)) \over \partial x_{\ell}} [\Phi_{\theta} (t, s)]_{\ell,\ell'} \nabla_{\rho}^{\ell'} =\sum_{j=1}^n [g_j (\tau, s)]_{\rho}
\, , \\
[g_j (\tau, s)]_{\rho} &= \int_{s}^{\tau} d t\sum_{\ell,\ell'} {\partial \sigma (\bm{x} (t)) \over \partial x_{\ell}} [\Phi_{\theta,j} (t, s)]_{\ell,\ell'} \nabla_{\rho}^{\ell'}  \, .
\end{align}

We first consider the term $g_{j=1}$. Note that $\nu_1=0$ and $\tilde{\bm v}_1 (t) \propto \dot{\bm x}(t)$. Hence we observe
\begin{align}
[g_1 (\tau, s)]_{\rho} &\propto \int_{s}^{\tau} d t \sum_{\ell,\ell'} {\partial \sigma(\bm{x} (t)) \over \partial x_{\ell }} \dot{x}_{\ell} (t) [\tilde{\bm{w}}^\dagger_1(s)]_{\ell'} \nabla_{\rho}^{\ell'} = \left[ \sigma (\bm{x} (\tau ))  - \sigma (\bm{x} (s )) \right] \sum_{\ell' }[\tilde{\bm{w}}^\dagger_1(s)]_{\ell'} \nabla_{\rho}^{\ell'} \, . 
\end{align}
From the property 2 in Sec.\ref{sm:pro}, this term never contributes to the divergence. Hence, we next consider the other cases $j\ge 2$. To this end, we write the term dependent on $t$ in the Fourier transform:
\begin{align}
\sum_{\ell} {\partial \sigma ({\bm x}(t)) \over \partial  x_{\ell} } [\tilde{\bm v}_j (t)]_{\ell} &= \sum_{k=-\infty}^{\infty} {\cal F}_{j,k} (\theta )  e^{i \omega k t} \, , \quad \omega=2\pi/T \, ,  
\end{align}
where we explicitly write the $\theta$-dependence in the Fourier transform. Then, we have
\begin{align} 
[g_j (\tau, s)]_{\rho} &=\sum_{k=-\infty}^{\infty} [g_{j,k} (\tau, s)]_{\rho} \, , \\
[g_{j,k} (\tau, s)]_{\rho} &= { {\cal F}_{j,k} (\theta ) \over \nu_j + i \omega k}( e^{i \omega k \tau} e^{\nu_j (\tau - s)} - e^{i \omega k s} ) \sum_{\ell'} [\tilde{\bm{w}}^\dagger_j(s)]_{\ell'} \nabla_{\rho}^{\ell'}  \, .
\end{align}
Note that $\omega$ is finite at and near the bifurcation point, and hence this structure tells us that the component $k=0$ and $j=2$ gives singular behavior, since $\nu_2 = O(\mu)$ while $\nu_j =O(1)$ for $j\ge 3$. Hence, the fluctuation is dominated by $[{\rm Var}\sigma]\sim [{\rm Var}\sigma]_{\rm D} $, where $[{\rm Var}\sigma]_{\rm D}$ is given by the component $g_{j=2,k=0}$ as
\begin{align}
  [{\rm Var}\sigma]_{\rm D} &:= \lim_{\tau\to\infty} {1\over \tau} \int_0^{\tau} ds 
\sum_{\rho} |[g_{2,0} (\tau, s)]_{\rho} + \mu_\rho |^2 {\cal A}_{\rho}(\bar{\bm x}(s))
=  [{\rm Var}\sigma]_{\rm D}^{(1)}  +  [{\rm Var}\sigma]_{\rm D}^{(2)} + {\rm const}.\, ,  \label{sm:varsigmad} \\
[{\rm Var}\sigma]_{\rm D}^{(1)} &=\lim_{\tau\to\infty}{1\over \tau} \int_{0}^{\tau} ds \sum_{\rho} [g_{2,0} (\tau, s)]_{\rho}[g_{2,0} (\tau, s)]_{\rho}^{\ast} {\cal A}_{\rho}(\bar{\bm x}(s)) \, , \\
[{\rm Var}\sigma]_{\rm D}^{(2)} &=\lim_{\tau\to\infty}{1\over \tau} \sum_{\rho}\mu_{\rho}{\rm Re} \int_{0}^{\tau} ds [g_{2,0} (\tau, s)]_{\rho} {\cal A}_{\rho}(\bar{\bm x}(s)) \, .
\end{align}
where the subscript `D' stands for the dominant contribution of the fluctuation. 

We first consider the term $[{\rm Var}\sigma]_{\rm D}^{(1)}$, which is explicitly written as 
\begin{align}
[{\rm Var}\sigma]_{\rm D}^{(1)} &= \lim_{\tau\to\infty}{1\over \tau} \int_{0}^{\tau} ds \sum_{\rho} \sum_{\ell,\ell'} { |{\cal F}_{2,0} (\theta ) |^2 \over |\nu_2|^2} |e^{\nu_2 (\tau -s)} - 1 |^2 [\tilde{\bm{w}}^\dagger_2(s)]_{\ell} \nabla_{\rho}^{\ell} {\cal A}_{\rho}(\bar{\bm x}(s))[\tilde{\bm{w}}^\dagger_2(s)]_{\ell'}^{\ast} \nabla_{\rho}^{\ell'} \, .
\end{align}
Let use the Fourier transform as follows
\begin{align}
\sum_{\rho,\ell,\ell'}
[\tilde{\bm{w}}^\dagger_2(s)]_{\ell} \nabla_{\rho}^{\ell} {\cal A}_{\rho}(\bar{\bm x}(s))[\tilde{\bm{w}}^\dagger_2(s)]_{\ell'}^{\ast} \nabla_{\rho}^{\ell'} &= \sum_{k'=-\infty}^{\infty} {\cal D}_{k'}(\theta ) e^{i \omega k' s} \, .
\end{align}
Then, we have 
\begin{align}
[{\rm Var}\sigma]_{\rm D}^{(1)} &=\sum_{k'=-\infty}^{\infty} \lim_{\tau\to\infty}{1\over \tau} \int_{0}^{\tau} ds { |{\cal F}_{2,0} (\theta ) |^2 \over |\nu_2|^2} |e^{\nu_2 (\tau -s)} - 1 |^2{\cal D}_{k'}(\theta ) e^{i \omega k' s}  \nonumber \\
&={ |{\cal F}_{2,0} (\theta ) |^2 \over |\nu_2|^2} {\cal D}_{0}(\theta ) \, , 
\end{align}
by using the fact that $\lim_{\tau\to\infty}{1/\tau}\int_0^{\tau} ds |e^{\nu_2 (\tau -s)} - 1 |^2 e^{i \omega k' s} =\delta_{k',0}$. Note that the term ${\cal F}_{2,0}(\theta)$ and ${\cal D}_0 (\theta )$ are written as 
\begin{align}
{\cal F}_{2,0} (\theta ) &= \int_0^{T} dt  \sum_{\ell}{\partial \sigma  (\bar{\bm x}(t)) \over \partial x_{\ell}} [\tilde{\bm v}_2 (t)]_{\ell} \, , \\
{\cal D}_0 (\theta ) &= \int_0^T d t\, \sum_{\rho ,\ell,\,\ell'}[\tilde{\bm w}_2^{\dagger} (t)]_{\ell} \nabla_{\rho}^{\ell} {\cal A}_{\rho} (\bar{\bm x} (t)) \nabla_{\rho}^{\ell'} [\tilde{\bm w}_2^{\dagger} (t)]_{\ell'}^{\ast} \, .
\end{align}
The term ${\cal D}_0 (\theta)$ neither diverges due to $|{\cal A}_{\rho} (\bar{\bm x} (t))|<\infty$ from the property 1 in Sec.\ref{sm:pro}, nor is infinitesimal since $\theta\to\theta_c$ leads to a finite value. On the other hand, as we show below, ${\cal F}_{2,0} (\theta ) = {O} (\sqrt{\mu})$.
To show this, we expand the limit cycle as ${\bm x}(t)={\bm x}_c + \sqrt{\mu} {\bm x}^{({\rm cm})} (t) +\cdots$, where ${\bm x}_c$ is the fixed point at $\theta=\theta_c$ and ${\bm x}^{({\rm cm})} (t)$ stands for the periodic motion. Then, we we use the lemma \ref{lemma2} and \ref{lemma3} which is explained below, to obtain
\begin{align}
{\cal F}_{2,0}(\theta ) &=\int_0^{T} d t \sum_{\ell} 
\Bigl[  {\partial \sigma \over \partial x_{\ell} }({\bm x}_{c}) + {\sqrt{\mu}}\sum_m \sum_{\ell}\,{\partial^2 \sigma ({\bm x}_{c}) \over \partial x_{\ell} \partial x_{m} } x_{m}^{({\rm cm})} (t)  + \cdots\Bigr]\times
\Bigl[ 
[\tilde{\bm v}_2^{(0)} (t)]_{\ell} + \sqrt{\mu} [\tilde{\bm v}_2^{(1)}(t)]_{\ell} + \cdots \Bigr] \nonumber \\
&=\sqrt{\mu} {\cal F}_{2,0}^{(1)} + \cdots \, , \\
{\cal F}_{2,0}^{(1)} &= \int_0^{2\pi/\omega_0} dt \,\sum_{\ell,m}
{\partial^2 \sigma ({\bm x}_{c}) \over \partial x_{\ell} \partial x_m} x_{m}^{({\rm cm})} (t) 
 [\tilde{\bm v}_2^{(0)}(t)]_{\ell} 
+\sum_{\ell}{\partial \sigma ({\bm x}_c)\over \partial x_{\ell}} 
 [\tilde{\bm v}_2^{(1)} (t)]_{\ell} \, , 
\end{align} 
where we have used $\int_0^{2\pi/\omega_0}dt \tilde{\bm v}_2^{(0)} (t)=0$ and $T={2\pi/\omega_0} + O(\mu)$ (See Eq.(\ref{sm:period})). From the Lemma\ref{lemma2}, we have $\tilde{\bm v}_2^{(0)} (t) = {\cal P} \left( e^{-i\omega_0 \sigma_y t} {\bm \alpha}^{(0)} , {\bm 0}\right)^T$  for the vector ${\bm \alpha}^{(0)}$.  Hence we obtain the behavior:
\begin{align}
[{\rm Var}\sigma]_{\rm D}^{(1)} &= O(\mu^{-1}) =O((\theta - \theta_c)^{-1}) \, . 
\end{align}

\begin{lemma}\label{lemma2}
The left and right eigenvectors for $j=2$ in the original coordinates are expanded as
\begin{align}
\begin{split}
{\bm v}_2 (\theta) &= {\bm v}_2^{(0)} + \sqrt{\mu} {\bm v}_2^{(1)} + \cdots \, , \\
{\bm w}_2 (\theta) &= {\bm w}_2^{(0)} + \sqrt{\mu} {\bm w}_2^{(1)} + \cdots \, .
\end{split}
\end{align}
\end{lemma}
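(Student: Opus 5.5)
The plan is to show that the monodromy matrix ${\bm M}_\theta$ depends smoothly on $\sqrt{\mu}$ (not merely on $\mu$), and that its eigenvalue $e^{\nu_2 T}$ is simple and isolated from the others. Standard analytic perturbation theory then gives the expansions of ${\bm v}_2$ and ${\bm w}_2$ in powers of $\sqrt{\mu}$.

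First I would rescale the limit cycle. In the flattened coordinates ${\bm \zeta}=({\bm c}^T,{\bm w}^T)^T$ of Sec.~\ref{sm:bifurcationnormalform}, the cycle lies on ${\bm w}={\bm 0}$. By the normal form (\ref{sm:hopf}) and (\ref{sm:randT}), it reads ${\bm c}(t)=\sqrt{\mu}\,\hat{\bm c}(\omega t)+O(\mu)$ with $\hat{\bm c}$ a circle of radius $1/\sqrt{\gamma}$. I would introduce $\epsilon=\sqrt{\mu}$ and write ${\bm c}=\epsilon\,{\bm u}$ and $\mu=\epsilon^2$, with time reparametrized by the phase $\phi=\omega t\in[0,2\pi)$. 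The rescaled vector field is smooth in $({\bm u},\epsilon)$, because $\tilde{\bm F}_\theta$ is smooth, indeed polynomial up to the center-manifold graph ${\bm h}$. Its periodic orbit persists smoothly in $\epsilon$ by the implicit function theorem applied to the Poincaré return map, whose nontrivial multiplier is $1-4\pi\gamma r^2/\omega_0+\dots\neq1$ after rescaling.

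Consequently $\bar{\bm x}(t)$, written as ${\bm x}_c+{\cal P}(\epsilon{\bm u},{\bm h}(\epsilon{\bm u},\epsilon^2))^T$, is a smooth function of $(\phi,\epsilon)$. The stability matrix $\mathbb S_\theta(\bar{\bm x}(t))$ is therefore a smooth function of $\epsilon$, uniformly in $\phi$, with $\mathbb S_\theta=\mathbb S_{\theta_c}({\bm x}_c)+\epsilon\,\mathbb S^{(1)}(\phi)+O(\epsilon^2)$. The linear variational equation $d\Phi/d\phi=\omega^{-1}\mathbb S\,\Phi$ integrated over $[0,2\pi]$ then yields ${\bm M}_\theta={\bm M}^{(0)}+\epsilon{\bm M}^{(1)}+O(\epsilon^2)$, by smooth dependence of solutions of linear ODEs on parameters. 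Here ${\bm M}^{(0)}=\exp[T_0\mathbb S_{\theta_c}({\bm x}_c)]$ with $T_0=2\pi/\omega_0$, and the period is also smooth in $\epsilon$.

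The main obstacle is isolation of the eigenvalue $e^{\nu_2T}$. At $\epsilon=0$, ${\bm M}^{(0)}$ has the center block $\exp(T_0\omega_0 J)={\mathbb I}_2$, so the multipliers $1$ and $e^{\nu_2 T}$ are degenerate. By Lemma~\ref{lemma1} they split only at order $\mu=\epsilon^2$, so naive perturbation theory does not directly give smooth eigenvectors.

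To handle this I would restrict ${\bm M}_\theta$ to its two-dimensional invariant subspace, which depends smoothly on $\epsilon$ via the Riesz projector around the cluster near $1$; the other multipliers are $O(1)$ away from it by Lemma~\ref{lemma1}. On this subspace I would exploit two facts. First, the multiplier $1$ has the exact eigenvector $\dot{\bm x}(t=0)$. Second, $\dot{\bm x}(0)/\epsilon$ is smooth in $\epsilon$ with a nonzero limit. Quotienting by this vector leaves the $1\times1$ block $e^{\nu_2T}$. The complementary eigenvector can then be obtained by solving $({\bm M}_\theta-e^{\nu_2T}){\bm v}_2=0$ in the two-dimensional space. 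I would use the rescaled monodromy $({\bm M}_\theta-{\mathbb I})/\epsilon^2$ restricted there, which is smooth in $\epsilon$ and has distinct eigenvalues $0$ and $-2T_0+O(\epsilon)$ at $\epsilon=0$. This matches the normal-form computation $\nu_2=-2\mu$, after checking that the $O(\epsilon)$ term of the restricted ${\bm M}_\theta-{\mathbb I}$ vanishes by the $\phi$-average and rotational symmetry of the cubic normal form.

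Simple-eigenvalue perturbation theory then gives ${\bm v}_2={\bm v}_2^{(0)}+\epsilon{\bm v}_2^{(1)}+\cdots$. The left eigenvector ${\bm w}_2$ follows identically from ${\bm M}_\theta^\dagger$, with the normalization ${\bm w}_j^\dagger{\bm v}_m=\delta_{jm}$ preserved order by order.
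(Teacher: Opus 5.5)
Your argument is correct, but it takes a different route from the paper.

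\textbf{The paper's route.} It stays in the flattened coordinates. The block upper-triangular form of $\tilde{\bm M}_\theta$ forces the su-component $\bm\beta$ of the eigenvector to vanish, because the eigenvalues of $\tilde{\bm M}_{\rm su}$ stay away from $1$. The center component is then asserted to be $\bm\alpha=\bm\alpha^{(0)}+O(\mu)$ from $\tilde{\bm M}_{\rm c}\sim\mathbb I+O(\mu)$. The whole $\sqrt{\mu}$ correction is attributed to the Jacobian ${\cal J}(0)$, via $\partial_{\bm c}\bm h=O(|\bm c|)=O(\sqrt{\mu})$ at the base point of the cycle.

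\textbf{Your route.} You show that ${\bm M}_\theta$ itself is smooth in $\epsilon=\sqrt{\mu}$ by rescaling the cycle. You replace the $\bm\beta=\bm 0$ step by a Riesz projector onto the cluster of multipliers near $1$. You then resolve the degenerate pair by passing to $({\bm M}_\theta-\mathbb I)/\epsilon^2$ on that subspace, whose eigenvalues $0$ and $-2T_0$ are distinct at $\epsilon=0$.

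\textbf{What each buys.} Your route supplies exactly the step the paper passes over. $\tilde{\bm M}_{\rm c}=\mathbb I+O(\mu)$ by itself determines nothing about $\bm\alpha$, since every vector is an eigenvector of $\mathbb I$. One needs the $O(\epsilon)$ term to vanish and the $O(\epsilon^2)$ coefficient to have simple eigenvalues, which is your rescaled-monodromy argument. The vanishing of the $O(\epsilon)$ term is unchanged under smooth $\epsilon$-dependent changes of basis. So it can be checked in normal-form coordinates, where quadratic terms are absent, or by your odd-harmonic averaging in the flattened ones. The paper's route, in exchange, is shorter and makes the leading structure explicit, $\bm v_2^{(0)}={\cal P}(\bm\alpha^{(0)},\bm 0)^T\in E^{\rm c}$, which is what the later Hopf computation uses. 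You recover the same fact, because the Riesz projector at $\epsilon=0$ projects onto $E^{\rm c}$.

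\textbf{One small repair.} At $\epsilon=0$ every rescaled orbit is $T_0$-periodic, so the rescaled return map is the identity. The implicit function theorem therefore has to be applied to the displacement divided by $\epsilon^2$, or you can simply cite the smooth amplitude parametrization in the Hopf theorem. A multiplier $1-4\pi\gamma r^2/\omega_0$ that tends to $1$ does not by itself give smoothness at $\epsilon=0$.
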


\begin{lemma}\label{lemma3}
The periodic vectors $\tilde{\bm v}_2 (t)$ and $\tilde{\bm w}_2 (t)$ are expanded as
\begin{align}
\begin{split}
\tilde{\bm v}_2 (t) &= \tilde{\bm v}_2^{(0)} (t) + \sqrt{\mu}\tilde{\bm v}_2^{(1)} (t) + \cdots \, , \\
\tilde{\bm w}_2 (t) &= \tilde{\bm w}_2^{(0)} (t)+ \sqrt{\mu} \tilde{\bm w}_2^{(1)} (t) + \cdots \, ,
\end{split}
\end{align}
where 
$\tilde{\bm v}_2^{(0)} (t) = \Phi_{\theta_c} (t,0) {\bm v}_2 (\theta_c)$ and $\tilde{\bm w}_2^{(0)}{}^{\dagger} (t) =  {\bm w}_2^{\dagger} (\theta_c) \Phi_{\theta_c} (0,t)$. 
\end{lemma}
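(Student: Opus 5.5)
The statement to prove is Lemma~\ref{lemma3}. The plan is to reduce it to Lemma~\ref{lemma2} together with a regular perturbation expansion of the fundamental matrix $\Phi_\theta(t,0)$ in powers of $\sqrt{\mu}$. The periodic vectors are defined by
\[
\tilde{\bm v}_2(t)=e^{-\nu_2 t}\,\Phi_\theta(t,0)\,{\bm v}_2(\theta),\qquad \tilde{\bm w}_2^\dagger(t)=e^{\nu_2 t}\,{\bm w}_2^\dagger(\theta)\,\Phi_\theta(0,t).
\]
Each is therefore a product of three factors: a scalar exponential, the fundamental matrix, and a Floquet eigenvector. I will expand each factor separately on the compact interval $t\in[0,T]$ and then multiply the expansions. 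Periodicity of $\tilde{\bm v}_2,\tilde{\bm w}_2$ holds exactly by construction, so the expansion only needs to be established on one period.

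\textbf{Step 1 (the exponential).} By Lemma~\ref{lemma1}, $\nu_2=-2\mu+O(\mu^2)$. Hence $e^{\mp\nu_2 t}=1+O(\mu)$ uniformly for $t\in[0,T]$, since $T=2\pi/\omega_0+O(\mu)$ by Eq.~(\ref{sm:period}). This factor therefore contributes nothing at order $\sqrt{\mu}$.

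\textbf{Step 2 (the fundamental matrix).} The limit cycle has the form $\bar{\bm x}(t)={\bm x}_c+\sqrt{\mu}\,{\bm x}^{({\rm cm})}(t)+O(\mu)$. This follows from the center-manifold representation with radius $r=\sqrt{\mu/\gamma}$ in Eq.~(\ref{sm:randT}) together with $\partial_{\bm c}{\bm h}({\bm 0},0)={\bm 0}$. Because the mass-action field is smooth, the stability matrix expands as
\[
\mathbb S_\theta(\bar{\bm x}(t))=\mathbb S_{\theta_c}({\bm x}_c)+\sqrt{\mu}\,\mathbb S^{(1)}(t)+O(\mu),\qquad \mathbb S^{(1)}(t)=\partial_{\bm x}\mathbb S_{\theta_c}({\bm x}_c)\cdot{\bm x}^{({\rm cm})}(t).
\]
The explicit $\theta$-dependence enters only at $O(\mu)$. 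The Dyson (variation-of-constants) series then gives
\[
\Phi_\theta(t,0)=\Phi_{\theta_c}(t,0)+\sqrt{\mu}\int_0^t\Phi_{\theta_c}(t,u)\,\mathbb S^{(1)}(u)\,\Phi_{\theta_c}(u,0)\,du+O(\mu),
\]
uniformly on $[0,T]$. This is Gronwall-controlled because the interval has finite length. Here $\Phi_{\theta_c}(t,s)=e^{(t-s)\mathbb S_{\theta_c}({\bm x}_c)}$, since at $\theta_c$ the reference solution is the fixed point. The inverse $\Phi_\theta(0,t)$ is treated the same way, using the adjoint equation.

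\textbf{Step 3 (combine and read off orders).} Insert Lemma~\ref{lemma2} for ${\bm v}_2(\theta)$ and ${\bm w}_2(\theta)$ and multiply out the three expansions. The $O(1)$ terms are $\Phi_{\theta_c}(t,0){\bm v}_2(\theta_c)$ and ${\bm w}_2^\dagger(\theta_c)\Phi_{\theta_c}(0,t)$, which are exactly the claimed leading terms. The $O(\sqrt{\mu})$ coefficients come out explicitly:
\[
\tilde{\bm v}_2^{(1)}(t)=\Phi_{\theta_c}(t,0){\bm v}_2^{(1)}+\int_0^t\Phi_{\theta_c}(t,u)\,\mathbb S^{(1)}(u)\,\Phi_{\theta_c}(u,0)\,du\;{\bm v}_2^{(0)},
\]
and analogously for $\tilde{\bm w}_2^{(1)}$.

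\textbf{Consistency check and main obstacle.} The leading term must itself be $T$-periodic. Using the flattened coordinates of Sec.~\ref{sm:bifurcationnormalform}, ${\bm v}_2(\theta_c)$ lies in $E^{\rm c}$, where $\mathbb S_{\theta_c}$ acts as the rotation generator with eigenvalues $\pm i\omega_0$. Thus $\Phi_{\theta_c}(t,0){\bm v}_2(\theta_c)={\cal P}(e^{-i\omega_0\sigma_y t}{\bm\alpha}^{(0)},{\bm 0})^T$, which is $2\pi/\omega_0$-periodic, as used above. The main obstacle is uniformity near $t=T$: the period $T(\mu)$ differs from $2\pi/\omega_0$ at $O(\mu)$, so the expansions must be compared on slightly different intervals. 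To handle this, I would rescale time as $t=T\varphi/(2\pi)$ and expand in the phase $\varphi$. The resulting period mismatch is $O(\mu)$, which is beyond the order retained in the lemma.
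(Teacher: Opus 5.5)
Your proof is correct, but it takes a somewhat different route from the paper.

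\textbf{The paper's route.} The paper never expands the full $n\times n$ fundamental matrix. It conjugates to the flattened coordinates, $\Phi_\theta(t,0)=\mathcal{J}(t)\tilde\Phi_\theta(t,0)\mathcal{J}^{-1}(0)$, and uses two facts:
\begin{itemize}
\item the block-upper-triangular form of $\tilde{\mathbb S}_\theta$, which follows from invariance of $\bm w=\bm 0$;
\item $\mathcal{J}^{-1}(0)\bm v_2(\theta)=(\bm\alpha,\bm 0)^T$, from the proof of Lemma 2.
\end{itemize}
Together these give exactly $\tilde{\bm v}_2(t)=\mathcal{J}(t)\,(\tilde\Phi_{\rm c}(t,0)\bm\alpha,\bm 0)^T$, up to the scalar $e^{-\nu_2 t}$. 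Only the $2\times 2$ center-block propagator and $\mathcal{J}(t)=\mathcal{P}+O(\sqrt\mu)$ then need to be expanded.

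\textbf{Your route.} You expand the full variational equation around the constant matrix $\mathbb S_{\theta_c}(\bm x_c)$ by a Duhamel/Dyson series. You then multiply by the expansions of Lemma 2 directly in the original coordinates.

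\textbf{What each buys.} Your approach is more elementary, since it needs no coordinate change beyond what Lemma 2 already uses. It also keeps the factor $e^{-\nu_2 t}$, which the paper silently drops (harmlessly, since it is $1+O(\mu)$). And it yields $\tilde{\bm v}_2^{(1)}(t)$ in closed form. The paper's approach gives the structural fact that $\tilde{\bm v}_2(t)$ stays tangent to the center manifold along the whole cycle. That makes it automatic that the leading term lies in $E^{\rm c}$ and rotates as $e^{-i\omega_0\sigma_y t}$; in your proof this enters only as an imported consistency check.

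\textbf{On your ``main obstacle''.} It is not actually needed for the lemma. The expansion is only used inside integrals over one period (${\cal F}_{2,0}$ and ${\cal D}_0$), and on a single period your Dyson/Gronwall bound is already uniform. The $O(\mu)$ mismatch between $T$ and $2\pi/\omega_0$ would matter only if you wanted uniformity in $t$ over many periods.
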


Next, we consider the term $[{\rm Var}\sigma]_{\rm D}^{(2)}$ which is given as follows:
\begin{align}
[{\rm Var}\sigma]_{\rm D}^{(2)} &=\lim_{\tau\to\infty}{1\over \tau} \sum_{\rho}\mu_{\rho}{\rm Re} \int_{0}^{\tau} ds \sum_{\rho} [g_{2,0} (\tau, s)]_{\rho} {\cal A}_{\rho}(\bar{\bm x}(s)) \nonumber \\
&=\lim_{\tau\to\infty}{1\over \tau} \sum_{\rho}\mu_{\rho}{\rm Re} \int_{0}^{\tau} ds 
{ {\cal F}_{2,0} (\theta ) \over \nu_2} (e^{\nu_2 (\tau -s)} - 1 ) \sum_{\ell'} [\tilde{\bm{w}}^\dagger_2(s)]_{\ell'} \nabla_{\rho}^{\ell'} {\cal A}_{\rho}(\bar{\bm x}(s)) \, . 
\end{align}
We introduce the Fourier transform:
\begin{align}
\sum_{\rho}\mu_\rho \sum_{\ell'} [\tilde{\bm{w}}^\dagger_2(s)]_{\ell'} \nabla_{\rho}^{\ell'} {\cal A}_{\rho}(\bar{\bm x}(s)) =\sum_{k'=-\infty}^{\infty} Q_{k'} (\theta )e^{i \omega k' s } \, ,
\end{align}
which leads to 
\begin{align}
[{\rm Var}\sigma]_{\rm D}^{(2)} &={\rm Re} \Bigl[ { {\cal F}_{2,0} (\theta ) \over \nu_2} Q_{0} (\theta ) \Bigr] \, . 
\end{align}
Since $\int_{0}^{T} \d s \tilde{\bm{w}}^{(0)\dagger}_2(s) = 0$, the zeroth-order contribution to $Q_0(\theta)$ vanishes, and therefore $Q_0(\theta)=O(\mu^{1/2})$. Using also ${\cal F}_{2,0} (\theta )=O(\mu^{1/2})$, we obtain
\begin{align}
[{\rm Var}\sigma]_{\rm D}^{(2)} &= O(1) \, .
\end{align}
Thus we conclude
\begin{align}
[{\rm Var}\sigma] & \sim [{\rm Var}\sigma]_{\rm D} \sim [{\rm Var}\sigma]_{\rm D}^{(1)} \propto 
(\theta - \theta_c)^{-1} \, .  
\end{align}

\subsubsection{The case of $\theta  <\theta_c$}
For the regime of $\theta < \theta_c$, the stable solution is not the periodic motion, but the fixed point $\bar{\bm x}$, and hence we use the formula in (\ref{sm:fpformula}) to discuss the fluctuation of entropy production. Note that the stability matrix has finite eigenvalues as in (\ref{sm:hopfeigen}). Eigenvalues in the stable+unstable manifold are $O(1)$, and eigenvalues in the center manifold are near $\pm i \omega_0$ for the parameter near $\theta_c$. This implies that the fluctuation ${\rm Var}\sigma$ never diverges, and hence we conclude $\alpha_- =0_-$, where the symbol $0_-$ denotes that the exponent is nonpositive.

\subsubsection{Proofs for the lemmas 1-3}\label{sm:proofsection}
The proof of the lemma \ref{lemma1} is as follows. 
\begin{proof}
Let $\mu:=\theta-\theta_c>0$, and let
\[
\bar{\bm \zeta}(t)=\bigl(\bar{\bm c}(t)^T,{\bm 0}^T\bigr)^T
\]
be the $T(\mu)$-periodic orbit in the flattened coordinates
${\bm \zeta}=({\bm c}^T,{\bm w}^T)^T$, where
${\bm w}={\bm s}-{\bm h}({\bm c})$.  The corresponding orbit in the original
coordinates is denoted by
\[
\bar{\bm x}(t)=\Psi(\bar{\bm \zeta}(t)).
\]
Here $\Psi$ is the smooth coordinate map from the flattened coordinates to the original coordinates. We define the Jacobian for a later use:
\begin{align}
{\cal J}(t):=\partial_{\bar{\bm \zeta}}\Psi(\bar{\bm \zeta}(t)).
\end{align}

We first clarify the relation between the variational equations in the original and flattened coordinates.  In the original coordinates, the variational equation along $\bar{\bm x}(t)$ is
\begin{align}
\delta\dot{\bm x}
&=
{\mathbb S}_{\theta}(\bar{\bm x}(t))\,\delta{\bm x},
\qquad
{\mathbb S}_{\theta}(\bar{\bm x}(t))
:=
\partial_{\bm x}{\bm F}_{\theta}(\bar{\bm x}(t)).
\end{align}
Equivalently, the same matrix is obtained by first transforming the nonlinear
vector field.  In the flattened coordinates the vector field is given by
\begin{align}
\dot{\bm c}
&=
\hat{\bm g}_{\rm c}({\bm c},{\bm w})
:=
{\bm g}_{\rm c}({\bm c},{\bm h}({\bm c})+{\bm w}),
\\
\dot{\bm w}
&=
\hat{\bm g}_{\rm w}({\bm c},{\bm w})
:=
{\bm g}_{\rm su}({\bm c},{\bm h}({\bm c})+{\bm w})
-
\partial_{\bm c}{\bm h}({\bm c})\,{\bm g}_{\rm c}({\bm c},{\bm h}({\bm c})+{\bm w}).
\end{align}
The stability matrix is computed as
\begin{align}
\tilde{\mathbb S}_{\theta}(t)
=
\partial_{\bm \zeta}
\begin{pmatrix}
\hat{\bm g}_{\rm c}\\
\hat{\bm g}_{\rm w}
\end{pmatrix}
\Bigg|_{({\bm c},{\bm w})=(\bar{\bm c}(t),{\bm 0})}.
\end{align}
Since the center manifold ${\bm s}={\bm h}({\bm c})$ is invariant, one has
\begin{align}
{\bm g}_{\rm su}({\bm c},{\bm h}({\bm c}))
=
\partial_{\bm c}{\bm h}({\bm c})\,{\bm g}_{\rm c}({\bm c},{\bm h}({\bm c})).
\end{align}
Therefore
\begin{align}
\hat{\bm g}_{\rm w}({\bm c},{\bm 0})={\bm 0}
\qquad
\text{for all ${\bm c}$}.
\end{align}
Differentiating this identity with respect to ${\bm c}$ yields
\begin{align}
\partial_{\bm c}\hat{\bm g}_{\rm w}({\bm c},{\bm 0})={\bm 0}.
\end{align}
Consequently, the stability matrix $\tilde{\mathbb S}_{\theta}(t)$ has the block upper triangular form
\begin{align}
\tilde{\mathbb S}_{\theta}(t)
=
\begin{pmatrix}
A_{\rm c}(t) & B(t)\\
{\bm 0} & A_{\rm su}(t)
\end{pmatrix},
\label{eq:block-upper-generator}
\end{align}
where $A_{\rm c}(t)=\partial_{\bm c}\hat{\bm g}_{\rm c}(\bar{\bm c}(t),{\bm 0})$, $
B(t)=\partial_{\bm w}\hat{\bm g}_{\rm c}(\bar{\bm c}(t),{\bm 0})$, and 
$A_{\rm su}(t)=\partial_{\bm w}\hat{\bm g}_{\rm w}(\bar{\bm c}(t),{\bm 0})$.

Let $\tilde{\Phi}_{\theta}(t,0)$ be the fundamental matrix of the flattened
variational equation, and write
\begin{align}
\tilde{\Phi}_{\theta}(t,0)=
\begin{pmatrix}
\tilde{\Phi}_{11}(t,0) & \tilde{\Phi}_{12}(t,0)\\
\tilde{\Phi}_{21}(t,0) & \tilde{\Phi}_{22}(t,0)
\end{pmatrix},
\qquad
\tilde{\Phi}_{\theta}(0,0)=\mathbb I.
\end{align}
From \eqref{eq:block-upper-generator}, the lower-left block satisfies
\begin{align}
\frac{d}{dt}\tilde{\Phi}_{21}(t,0)
=
A_{\rm su}(t)\tilde{\Phi}_{21}(t,0),
\qquad
\tilde{\Phi}_{21}(0,0)={\bm 0}.
\end{align}
By uniqueness of solutions, we have 
\begin{align}
\tilde{\Phi}_{21}(t,0)\equiv {\bm 0}.
\end{align}
Thus the fundamental matrix is written in the following form:
\begin{align}
\tilde{\Phi}_{\theta}(t,0)=
\begin{pmatrix}
\tilde{\Phi}_{\rm c}(t,0) & *\\
{\bm 0} & \tilde{\Phi}_{\rm su}(t,0)
\end{pmatrix},
\end{align}
where
\begin{align}
\frac{d}{dt}\tilde{\Phi}_{\rm c}(t,0)
&=
A_{\rm c}(t)\tilde{\Phi}_{\rm c}(t,0),
\qquad
\tilde{\Phi}_{\rm c}(0,0)=\mathbb I,
\\
\frac{d}{dt}\tilde{\Phi}_{\rm su}(t,0)
&=
A_{\rm su}(t)\tilde{\Phi}_{\rm su}(t,0),
\qquad
\tilde{\Phi}_{\rm su}(0,0)=\mathbb I.
\end{align}
The monodromy matrix in the flattened coordinates is written in the following form
\begin{align}
\tilde{\bm M}_{\theta}
=
\tilde{\Phi}_{\theta}(T,0)
=
\begin{pmatrix}
\tilde{\bm M}^{({\rm c})}_{\theta} & *\\
{\bm 0} & \tilde{\bm M}^{({\rm su})}_{\theta}
\end{pmatrix},
\end{align}
where $\tilde{\bm M}^{({\rm c})}_{\theta}=\tilde{\Phi}_{\rm c}(T,0)$ and $\tilde{\bm M}^{({\rm su})}_{\theta}=\tilde{\Phi}_{\rm su}(T,0)$.
Hence 
$\det(\lambda \mathbb I-\tilde{\bm M}_{\theta}) =\det(\lambda \mathbb I-\tilde{\bm M}^{({\rm c})}_{\theta})\det(\lambda \mathbb I-\tilde{\bm M}^{({\rm su})}_{\theta})$. 
The Floquet multipliers of $\tilde{\bm M}_{\theta}$ are the union of those of the center and su (stable+unstable) blocks. 

We now relate this result to the monodromy matrix in the original coordinates.
Let $\Phi_{\theta}(t,0)$ be the fundamental matrix in the original coordinate. The fundamental matrices in the two coordinates satisfy the relation
\begin{align}
\Phi_{\theta}(t,0) = {\cal J}(t)\,\tilde{\Phi}_{\theta}(t,0)\,{\cal J}^{-1}(0).
\label{eq:fundamental-matrix-gauge-relation}
\end{align}
Since the orbit is periodic, we have $\bar{\bm \zeta}(T)=\bar{\bm \zeta}(0)$ and ${\cal J}(T)={\cal J}(0)$. Therefore the monodromy matrix originally defined by
\begin{align}
{\bm M}_{\theta}
=
{\cal T}\exp\left[\int_0^T dt\,
{\mathbb S}_{\theta}(\bar{\bm x}(t))\right]
=
\Phi_{\theta}(T,0)
\end{align}
is conjugate to the flattened monodromy matrix:
\begin{align}
{\bm M}_{\theta}
=
{\cal J}(0)\,\tilde{\bm M}_{\theta}\,{\cal J}^{-1}(0).
\end{align}
Thus ${\bm M}_{\theta}$ and $\tilde{\bm M}_{\theta}$ have the same Floquet multipliers. The connection term affects the generator $\tilde{\mathbb S}_{\theta}(t)$, but it does not invalidate the above spectral decomposition, because the block triangular structure follows from the invariance of the flattened center manifold ${\bm w}=0$.

We next analyze the center block.  For a generic supercritical Hopf bifurcation,
the reduced dynamics on the two-dimensional center manifold can be written,
after a smooth local change of coordinates, in polar coordinates $(r,\phi)$ as
\begin{align}
\dot r &= \mu r - \gamma r^3 + R_r(r,\mu), \\
\dot \phi &= \omega_0 + R_{\phi}(r,\mu),
\end{align}
where $\gamma>0$. The functions $R_r$ and $R_{\phi}$ are respectively defined as $R_r(r,\mu)=O(\mu^2 r+\mu r^3+r^5)$ and $R_{\phi}(r,\mu)=O(\mu+r^2)$. For $\mu>0$ sufficiently small, there exists a periodic orbit on the center manifold with
\begin{align}
r=\bar r(\mu)=\sqrt{\mu/\gamma}+O(\mu^{3/2}),
\qquad
T(\mu)=\frac{2\pi}{\omega_0}+O(\mu).
\end{align}
Linearizing the radial equation along this orbit gives
\begin{align}
\delta \dot r
&=
\left[
\mu-3\gamma \bar r^2(\mu)+\partial_r R_r(\bar r(\mu),\mu)
\right]\delta r =
\left[-2\mu+O(\mu^2)\right]\delta r,
\end{align}
because $\bar r^2(\mu)=\mu/\gamma+O(\mu^2)$ and
$\partial_r R_r(\bar r(\mu),\mu)=O(\mu^2)$.  Therefore the nontrivial
Floquet exponent in the center directions is
\begin{align}
\nu_2
=
-2\mu+O(\mu^2)
=
-2(\theta-\theta_c)+O((\theta-\theta_c)^2) \, .
\end{align}

The other center Floquet exponent is exactly zero.  Indeed, differentiating the
equation of motion for the periodic orbit on the center manifold gives
\begin{align}
\frac{d}{dt}\dot{\bar{\bm c}}(t)
=
A_{\rm c}(t)\dot{\bar{\bm c}}(t).
\end{align}
Since $\bar{\bm c}(t)$ is $T$-periodic, so is $\dot{\bar{\bm c}}(t)$, and hence
\begin{align}
\tilde{\bm M}^{({\rm c})}_{\theta}\,\dot{\bar{\bm c}}(0)
=
\dot{\bar{\bm c}}(0).
\end{align}
Thus one center Floquet multiplier equals $1$, corresponding to the Floquet
exponent
\begin{align}
\nu_1=0.
\end{align}

Finally, we consider the su (stable+unstable) block.  As $\mu\to 0^+$, the periodic orbit
shrinks to the Hopf fixed point, and therefore
\begin{align}
A_{\rm su}(t)=A_{{\rm su},0}+O(\sqrt{\mu}),
\qquad
A_{{\rm su},0}:=
\tilde{\mathbb S}^{({\rm su})}_{\theta_c}({\bm 0}),
\end{align}
uniformly for $t\in[0,T(\mu)]$.  The eigenvalues of $A_{{\rm su},0}$ are the
eigenvalues of the linearization at the Hopf point restricted to
$E^{\rm s}\oplus E^{\rm u}$, and hence all of them have nonzero real parts.
Consequently,
\begin{align}
\tilde{\bm M}^{({\rm su})}_{\theta}
=
e^{T_0 A_{{\rm su},0}}+O(\sqrt{\mu}),
\qquad
T_0:=\frac{2\pi}{\omega_0}.
\end{align}
Choosing the branches of the logarithm continuously from $\mu=0$, standard
perturbation theory for periodic linear systems gives, for $j=3,\dots,n$,
\begin{align}
\nu_j(\theta)=\lambda_j(\theta_c)+o(1),
\end{align}
where $\lambda_j(\theta_c)$ are the eigenvalues of $A_{{\rm su},0}$.  In
particular, these $n-2$ exponents remain of order $O(1)$.

Collecting the center and su spectra, we conclude that for
$\theta>\theta_c$ sufficiently close to $\theta_c$, the Floquet exponents
consist of one zero exponent, one exponent of order
$O(\theta-\theta_c)$, and $n-2$ exponents of order $O(1)$.
This completes the proof.
\end{proof}

\vspace*{1.5cm}

The proof of the lemma \ref{lemma2} is as follows. 
\begin{proof}
We consider the flattened ${\bm \zeta}$ coordinate. The monodromy matrix in the ${\bm \zeta}$ coordinate has the following structure 
\begin{align}
\tilde{\bm M}_{\theta} &=
\left( 
\begin{array}{ll}
\tilde{\bm M}_{\rm c} \, , & \tilde{\bm M}_{\rm csu} \\
{\bm 0} \, , & \tilde{\bm M}_{\rm su}
\end{array}
\right) = {\cal J}^{-1}(0) {\bm M}_{\theta} {\cal J}(0) \, . 
\end{align}
Let us denote the right eigenvector of $\tilde{\bm M}_{\theta}$ as ${\bm \zeta}_2$, which is related to the full space vector via ${\bm v}_2 (\theta) ={\cal J}(0) {\bm \zeta}_2$. We decompose the vector ${\bm \zeta}_2$ as
\begin{align}
{\bm \zeta}_2 &= 
\left( 
\begin{array}{l}
{\bm \alpha} \\ {\bm \beta}
\end{array}
\right) .
\end{align}
The eigenvalue equation reads
\begin{align}
\begin{split}
\tilde{\bm M}_{\rm c} {\bm \alpha} + \tilde{\bm M}_{\rm csu} {\bm \beta} &= e^{\nu_2 T} {\bm \alpha} \, , \\
\tilde{\bm M}_{\rm su} {\bm \beta} &= e^{\nu_2 T} {\bm \beta} \, .
\end{split}
\end{align}
Since $\nu_2 = {O}(\mu) \ll 1$, we have $e^{\nu_2 T} \sim 1$. The second equation implies that ${\bm \beta}$ must be a zero vector, i.e., ${\bm \beta}={\bm 0}$, because the eigenvalues of $\tilde{\bm M}_{\rm su}$ are $O(1)$ and bounded away from $1$. The eigenvalue equation is then reduced to $\tilde{\bm M}_{\rm c} {\bm \alpha}= e^{\nu_2 T} {\bm \alpha}$. Because $\tilde{\bm M}_{\rm c}$ is expanded as $\tilde{\bm M}_{\rm c}\sim {\mathbb I} + {O(\mu)}$, we have
${\bm \alpha}= {\bm \alpha}^{(0)} + {O}(\mu) {\bm \alpha}^{(1)}$. 

To obtain the eigenvector in the original coordinates, we multiply it by the transformation Jacobian:
\begin{align}
{\bm v}_2 (\theta )&= {\cal J}(0) \left( \begin{array}{l} {\bm \alpha} \\ {\bm 0} \end{array} \right)
= {\cal P} \left( \begin{array}{cc} {\mathbb I} & {\bm 0} \\\partial_{\bm c} {\bm h}({\bm c}(0)) & {\mathbb I} \end{array} \right) 
 \left( \begin{array}{l}
{\bm \alpha}^{(0)} + {O}(\mu) {\bm \alpha}^{(1)} \\ {\bm 0}
\end{array} \right) .
\end{align}
On the limit cycle of radius $r \sim \sqrt{\mu}$, the derivative of the center manifold scales as $\partial_{\bm c} {\bm h}({\bm c}) = {O}(|{\bm c}|) = {O}(\sqrt{\mu})$. Hence, the nonlinear transformation introduces an $O(\sqrt{\mu})$ correction to the full-space eigenvector:
\begin{align}
{\bm v}_2 (\theta )&= {\cal P} \left( \begin{array}{l} {\bm \alpha}^{(0)} \\ {\bm 0} \end{array} \right) + \sqrt{\mu} {\bm v}_2^{(1)} +\cdots= {\bm v}_2^{(0)} + \sqrt{\mu} {\bm v}_2^{(1)} +\cdots \, .
\end{align}
A similar argument applies to the left eigenvector, yielding ${\bm w}_2 (\theta) = {\bm w}_2^{(0)} + \sqrt{\mu} {\bm w}_2^{(1)} + \cdots $.
\end{proof}

\vspace*{1.5cm}

The proof of the lemma \ref{lemma3} is as follows. 
\begin{proof}
Note that we consider $t\neq 0$ and $t\neq T$. We map the state transition matrix back to the original coordinates using the Jacobian of the transformation evaluated along the limit cycle, ${\cal J}(t)$. We write $\tilde{\Phi}_{\theta} (t,0)={\cal J}^{-1}(t)\Phi_{\theta} (t,0) {\cal J}(0)$. Then we have
\begin{align}
\tilde{\bm v}_2 (t) &= {\Phi}_{\theta} (t,0) {\bm v}_2 (\theta ) =
{\cal J}(t) \tilde{\Phi}_{\theta} (t,0) {\cal J}^{-1}(0) {\bm v}_2 (\theta ) \nonumber \\
&={\cal J}(t)\tilde{\Phi}_{\theta} (t,0) \left( \begin{array}{l}{\bm \alpha} \\ {\bm 0} \end{array}\right) 
= {\cal J}(t)
\left( \begin{array}{l}
\hat{T}\exp\Bigl[\int_0^{t} ds \tilde{\mathbb S}_{\theta}^{({\rm c})} (s) \Bigr] {\bm \alpha} \\ {\bm 0} \end{array}\right) \, .
\end{align}
By expanding the exponential as in standard perturbation theory, we note:
\begin{align}
\hat{T}\exp\Bigl[\int_0^{t} ds \tilde{\mathbb S}_{\theta}^{({\rm c})} (s) \Bigr]
&= \tilde{\Phi}_{\theta_c} (t,0) + {O}(\sqrt{\mu}) \, .
\end{align}
Furthermore, since $\partial_{\bm c}{\bm h}({\bm c}(t)) = {O}(\sqrt{\mu})$ along the limit cycle, the transformation matrix scales as ${\cal J}(t) = {\cal P} + {O}(\sqrt{\mu})$. This implies 
\begin{align}
\tilde{\bm v}_2 (t) &= \left[ {\cal P} + {O}(\sqrt{\mu}) \right] \left( \begin{array}{l} 
\tilde{\Phi}_{\theta_c} (t,0) {\bm \alpha}^{(0)} + {O}(\sqrt{\mu}) \\
{\bm 0}
\end{array} \right)  \nonumber \\
&= \tilde{\bm v}^{(0)}_2 (t) + \sqrt{\mu} \tilde{\bm v}_2^{(1)} \, . 
\end{align}
The same logic as this derivation is applied to obtain the relation for $\tilde{\bm w}_2 (t)$.
\end{proof}

\section{Parametric response of entropy production rate}\label{res_entropyprod}
As explained in Sec.\ref{classification_of_entropyfluctuation}, we consider the case where the concentration $a_{\ell}$ in the chemostat so that one can control the amplitude of the parameter $\theta$. The parameter $\theta$ is a control parameter to govern the bifurcations. In this setup, we discuss {\it general} diverging behavior of the response of the average entropy production against the change of parameter $\theta$ by defining the exponents. 
We define the parametric response as
\begin{align}
\partial_{\theta}\sigma &= {\partial \over \partial \theta }
\lim_{\tau\to}{1\over \tau} \int_0^{\tau} dt \sigma (t) \, .
\end{align}
We then look at the diverging behavior with the exponents $\beta_{\pm}$:
\begin{align}
{\partial_{\theta} \sigma }& \propto
\left\{
\begin{array}{ll}
(\theta - \theta_c)^{-\beta_+} & \theta > \theta_c \\
(\theta_c - \theta )^{-\beta_-} & \theta < \theta_c \\
\end{array} 
\right.
\end{align}

\subsection{Fixed-point cases}\label{res_fixpoint}
In the case of fixed points, we write the entropy production as $\sigma (\theta , \bar{\bm x})$, where $\bar{\bm x}$ is a $\theta$-dependent fixed point. Hence one can write the parametric response as 
\begin{align}
{\partial_{\theta} \sigma} &=
{\partial \sigma (\theta, \bar{\bm x})\over \partial \theta } + \sum_{\ell}{\partial \sigma (\theta, \bar{\bm x})\over \partial \bar{x}_{\ell}} {\partial \bar{x}_{\ell} \over \partial \theta} \, .
\nonumber \\
&= {\partial \sigma (\theta, \bar{\bm x})\over \partial \theta } + [{\rm grad}\sigma]^{\top} {\cal J} (\bar{\bm c}) {\partial \bar{\bm \zeta} \over \partial \theta} \, . 
\end{align} 
From the property 3 in Sec.\ref{sm:pro}, $[{\rm grad}\sigma]^{\top}$ is finite,  and hence the singular behavior can appear from the term ${\partial \bar{\bm \zeta}/ \partial \theta}$.

\noindent\\
In the case of pitchfork bifurcation, we have the solution $\bar{c}=\pm\sqrt{\mu}$ for $\mu:=\theta -\theta_c>0$, and $0$ for $\mu<0$, and hence we have $(\beta_+,\beta_-)=(1/2,0_-)$. 

\noindent\\
In the case of transcritical bifurcation, we have the solution $\bar{c}=\mu$ for $\mu >0$, and $0$ for $\mu<0$, and hence we have $(\beta_+,\beta_-)=(0_-,0_-)$.  

\noindent\\
In the case of saddle-node bifurcation, we have the solution $\bar{c}=\pm \sqrt{\mu}$ for $\mu>0$, and hence we have $(\beta_+,\beta_-)=(1/2,\varnothing)$.  

\subsection{Hopf bifurcation}\label{res_hopf}
In the case of a supercritical Hopf bifurcation, the normal form yields the solution
\begin{align}
\bar{\bm{c}}(t) &= 
\left\{
\begin{array}{ll}
\bm{0}, & \quad \text{for } \mu <0 ,\\
\sqrt{\mu}
\begin{pmatrix}
\cos(\omega t) \\
\sin(\omega t)
\end{pmatrix}
, &\quad \text{for } \mu >0 \, . 
\end{array}
\right.
\end{align}
For $\mu >0$, the solution is periodic with period $T = 2\pi/\omega$. Accordingly, we consider
\begin{align}
\partial_{\theta}\sigma
&= \frac{\partial}{\partial \theta} \sum_{\rho} \mu_{\rho}
\frac{1}{T} \int_{0}^{T}
\left( J_{+\rho}(\bar{\bm{x}}(t)) - J_{-\rho}(\bar{\bm{x}}(t)) \right) dt.
\end{align}
By the law of mass action, $J_{\pm \rho}$ are polynomials in the components of $\bar{\bm{c}}(t)$. Therefore, they can be expressed as polynomials of $\sqrt{\mu} \cos(\omega t)$, $\sqrt{\mu} \sin(\omega t)$, and constants.
Using the identity
\begin{equation}
\frac{1}{2\pi} \int_{0}^{2\pi} (\cos x)^m (\sin x)^n \, dx = 0,
\quad \text{if either $m$ or $n$ is odd},
\end{equation}
it follows that the term 
$\frac{1}{T} \int_{0}^{T}
\left( J_{+\rho}(\bm{x}(t)) - J_{-\rho}(\bm{x}(t)) \right) dt
$
contains only integer powers of $\mu$, and no half-integer powers. Consequently, we obtain $(\beta_-, \beta_+) = (0_-, 0_-)$.

\section{General inequality between $\alpha$ and $\beta$}
In this section, we derive the general inequality between $\alpha$ and $\beta$, valid for any kinds of bifurcation in the chemical reactions. We use the environment entropy (\ref{entropybath}). One can easily find the following identity:
\begin{align}
{\partial_{\theta}} \langle \Sigma_{\rm e} (\Gamma) \rangle - \langle \partial_{\theta}\Sigma_{\rm e} (\Gamma) \rangle 
&=
\int d\Gamma \, {\cal P}(\Gamma ) {\partial \ln {\cal P}(\Gamma) \over \partial \theta} \Sigma_{\rm e} (\Gamma ) \, .
\end{align}
The Cauchy-Schwarz inequality leads to the Cramer-Rao type inequality
\begin{align}
\Bigl[ {\partial_{\theta}} \langle \Sigma_{\rm e} (\Gamma) \rangle - \langle \partial_{\theta}\Sigma_{\rm e} (\Gamma) \rangle  \Bigr]^2 &\le I(\theta , \tau)\, {\rm Var}\Sigma_{\rm e} \, , 
\end{align}
where $I(\theta, \tau)$ and ${\rm Var}\Sigma_{\rm e}$ are respectively defined as
\begin{align}
I(\theta, \tau) &= \Bigl< -{\partial^2 \over \partial \theta^2} \ln {\cal P} (\Gamma)\Bigr> \, , \\
{\rm Var}\Sigma_{\rm e} &= \Bigl< \Bigl[ \sum_{\rho} \mu_{\rho} Z_{\rho} (\Gamma)\Bigr]^2 \Bigr> -\Bigl< \sum_{\rho} \mu_{\rho} Z_{\rho} (\Gamma) \Bigr>^2 \, . 
\end{align}
The quantity $I(\theta , \tau)$ is the Fisher information computed through the path trajectory. We define the following scaled Fisher information 
\begin{align}
i(\theta) &:= \lim_{\tau \to \infty}{1\over \tau} \lim_{\Omega \to \infty} {1\over \Omega} I(\theta , \tau) \, . 
\end{align}
We obtain the following inequality
\begin{align}
\Bigl[ {\partial_{\theta} \sigma} -
\lim_{\tau \to \infty}{1\over \tau} \lim_{\Omega \to \infty} {1\over \Omega}  \langle \partial_{\theta}\Sigma_{\rm e} (\Gamma) \rangle  \Bigr]^2 &\le i(\theta )\, {\rm Var}\sigma \, . \label{sm:basedinequality}
\end{align}
We first discuss the property of the second part in the left hand side. Using the exact micro-macro correspondence proven by Kurtz \cite{Kurtz10.1063/1.1678692,Kurtz_1971}, we have  
\begin{align}
\lim_{\tau \to \infty}{1\over \tau} \lim_{\Omega \to \infty} {1\over \Omega}  \langle \partial_{\theta}\Sigma_{\rm e} (\Gamma) \rangle &= \sum_{\rho} ({\partial_{\theta} \mu_{\rho} }) \lim_{\tau\to\infty} {1\over \tau} \int_0^{\tau} dt \, J_{\rho} ({\bm x}(t)) \, , 
\end{align}
From the property 1 in Sec.\ref{sm:pro}, the long time average of $J_{\rho}$ must be finite, and hence this term does not show the divergence. 

We next consider the general properties of the scaled Fisher information as follows. We first observe the following expression for the Fisher information.
\begin{align}
I(\theta , \tau) &=-\int d\Gamma \, \Bigl[{\partial^2 \over \partial \theta^2} \ln {\cal P} (\Gamma ) \Bigr] {\cal P} (\Gamma ) \nonumber \\
&=-\sum_{\rho}\left[ \sum_{{\bm n} \neq {\bm n}'} \left({\partial^2 \over \partial \theta^2} \ln W_{{\bm n},{\bm n}'}^{\rho} \right) W_{{\bm n},{\bm n}'}^{\rho} P({\bm n}', t) -\sum_{{\bm n} \neq {\bm n}'} \left({\partial^2 \over \partial \theta^2} W_{{\bm n},{\bm n}'}^{\rho} \right) P({\bm n}', t)  \right] \nonumber \\
&= \sum_{\rho}\sum_{{\bm n} \neq {\bm n}'} \left({\partial \over \partial \theta} \ln W_{{\bm n},{\bm n}'}^{\rho} \right)^2 W_{{\bm n},{\bm n}'}^{\rho} P({\bm n}', t)  \, .
\end{align}
Now we use the detailed expression of the transition rate (\ref{sm:wrhonn}) to obtain
\begin{align}
{\partial \over \partial \theta}
\ln W^{\rho}_{\bm{n}, \bm{n} \mp \nabla_{\rho}} 
&= {\partial \over \partial \theta} \sum_{\ell' \in \mathcal{S}_c} { \nabla_{\pm\rho}^{\ell'}} \ln a_{\ell'} =\sum_{\ell' \in \mathcal{S}_c} { \nabla_{\pm\rho}^{\ell'}} {\hat{a}_{\ell'}\over a_{\ell'} }
= \sum_{\ell' \in \mathcal{S}_c} { \nabla_{\pm\rho}^{\ell'} \over \theta + a_{\ell'}^{(0)}/\hat{a}_{\ell'}} \, ,
\end{align}
where we use the definition of $\theta$ in (\ref{sm:defoftheta}). The scaled Fisher information is thus given as follows
\begin{align}
i(\theta) &= \lim_{\tau\to\infty} {1\over \tau} \int_{0}^{\tau} \sum_{\rho} \left[ \left( \sum_{\ell' \in \mathcal{S}_c} { \nabla_{+\rho}^{\ell'} \over \theta + a_{\ell'}^{(0)}/\hat{a}_{\ell'}} \right)^2  J_{+\rho } ({\bm x}(t)) + \left( \sum_{\ell' \in \mathcal{S}_c} { \nabla_{-\rho}^{\ell'} \over \theta + a_{\ell'}^{(0)}/\hat{a}_{\ell'}} \right)^2 
J_{-\rho } ({\bm x}(t)) \right] \nonumber \\
&\le c_{\rm max} 
\lim_{\tau\to\infty} {1\over \tau} \int_{0}^{\tau} \sum_{\rho} \left[J_{+\rho } ({\bm x}(t)) 
+ J_{-\rho } ({\bm x}(t)) \right] \, , \label{sm:ithetaexpression}
\end{align} 
where 
\begin{align}
c_{\rm max} &:=\max_{\rho} \left\{ \left( \sum_{\ell' \in \mathcal{S}_c} { \nabla_{+\rho}^{\ell'} \over \theta + a_{\ell'}^{(0)}/\hat{a}_{\ell'}} \right)^2 , \left( \sum_{\ell' \in \mathcal{S}_c} { \nabla_{-\rho}^{\ell'} \over \theta + a_{\ell'}^{(0)}/\hat{a}_{\ell'}} \right)^2 \right\} \, .
\end{align}
The inequality (\ref{sm:ithetaexpression}) indicates that the scaled Fisher information is related to the activity per unit volume in the thermodynamic limit. From the property 1 in Sec.\ref{sm:pro}, the scaled Fisher information generically does not show the divergence. 

As a result, the inequality (\ref{sm:basedinequality}) immediately leads to the following inequality on the exponents
\begin{align}
\alpha &\ge 2 \beta \, ,
\end{align}
for $\alpha \ge 0$ and $\beta \ge 0$. 

\end{widetext}

\end{document}

%% file: pitchfork.tex
\begin{tikzpicture}[scale=1, baseline=(current bounding box.center)]
  \draw[->, black!100] (-1.5,0) -- (1.5,0) node[below] {$\theta$};
  \draw[->, black!100] (0,-1.6) -- (0,1.8) node[above] {$x$};

  \node[below left] at (0,0) {$\theta_c$};

  \draw[thick, blue!70!black] (-1.5,0) -- (0,0);
  \draw[dashed, thick, red!70!black] (0,0) -- (1.3,0);

  \draw[thick, blue!70!black, domain=0:1.3, samples=100]
    plot ({\x},{sqrt(\x)});
  \draw[thick, blue!70!black, domain=0:1.3, samples=100]
    plot ({\x},{-sqrt(\x)});
\end{tikzpicture}

%% file: transcritical.tex
\begin{tikzpicture}[scale=1, baseline=(current bounding box.center)]
  \draw[->, black!100] (-1.5,0) -- (1.5,0) node[below] {$\theta$};
  \draw[->, black!100] (0,-1.6) -- (0,1.8) node[above] {$x$};

  \draw[black!0] (0,-1.8) -- (0,-1.6);
  \node[below left] at (0,0) {$\theta_c$};

  \draw[thick, blue!70!black] (-1.3,0) -- (0,0);
  \draw[dashed, thick, red!70!black] (0,0) -- (1.3,0);

  \draw[dashed, thick, red!70!black] (-1.3,-1.3) -- (0,0);
  \draw[thick, blue!70!black] (0,0) -- (1.3,1.3);
\end{tikzpicture}

%% file: saddlenode.tex
\begin{tikzpicture}[scale=1, baseline=(current bounding box.center)]
  \draw[->, black!100] (-0.3,0) -- (1.5,0) node[below] {$\theta$};
  \draw[->, black!100] (0,-1.8) -- (0,1.6) node[above] {$x$};

  \node[below left] at (0,0) {$\theta_c$};

  \draw[dashed, thick, red!70!black, domain=0:1.3, samples=100]
    plot ({\x},{-sqrt(\x)});
  \draw[thick, blue!70!black, domain=0:1.3, samples=100]
    plot ({\x},{sqrt(\x)});
\end{tikzpicture}

%% file: hopf.tex
\begin{tikzpicture}[scale=1, baseline=(current bounding box.center)]

  \tdplotsetmaincoords{70}{15}

  \begin{scope}[tdplot_main_coords]

    \draw[->, black!100] (-1.5,0,0) -- (1.8,0,0) node[right] {$\theta$};
    \draw[->, black!100] (0,1.8,0) -- (0,-1.8,0);   
    \draw[->, black!100] (0,0,-1.8) -- (0,0,1.8) node[left] {$x$};

    \node[black!100] at (0,-1.7,-0.2) {$y$};

    \node[below left] at (0,0,0) {$\theta_c$};

    \draw[thick, blue!70!black] (-1.5,0,0) -- (0,0,0);
    \draw[dashed, thick, red!70!black] (0,0,0) -- (1.5,0,0);

    \foreach \t in {0.4,0.9,1.4} {
      \pgfmathsetmacro\r{sqrt(\t)}
      \draw[thick, blue!70!black]
        plot[domain=0:360, samples=80]
        ({\t},{-\r*cos(\x)},{\r*sin(\x)});
    }

  \end{scope}
\end{tikzpicture}

%% file: div_twosided.tex
\begin{tikzpicture}[scale=1, baseline=(current bounding box.center)]
  \draw[->, black!100] (-1.5,0) -- (1.5,0) node[below] {$\theta$};
  \draw[->, black!100] (0,0) -- (0, 2.2) node[above]  {$\text{Var}\sigma$};

  \node[black!100, below] at (0,0) {$\theta_c$};

  \draw[thick, blue!70!black, domain=-1.3:-0.12, samples=80]
    plot ({\x},{0.25/(-\x)});

  \draw[thick, blue!70!black, domain=0.12:1.3, samples=80]
    plot ({\x},{0.25/(\x)});

\end{tikzpicture}

%% file: div_saddle_node.tex
\begin{tikzpicture}[scale=1, baseline=(current bounding box.center)]
  \draw[->, black!100] (-0.3,0) -- (1.5,0) node[below] {$\theta$};
  \draw[->, black!100] (0,0) -- (0, 2.2) node[above]  {$\mathrm{Var}\,\sigma$};

  \node[black!100, below] at (0,0) {$\theta_c$};


  \draw[thick, blue!70!black, domain=0.12:1.3, samples=80]
    plot ({\x},{0.25/(\x)});
\end{tikzpicture}

%% file: div_lambdatype.tex
\begin{tikzpicture}[scale=1, baseline=(current bounding box.center)]
  \draw[->, black!100] (-1.5,0) -- (1.5,0) node[below] {$\theta$};
  \draw[->, black!100] (0,0) -- (0, 2.2) node[above]  {$\mathrm{Var}\,\sigma$};

  \node[black!100, below] at (0,0) {$\theta_c$};

  \draw[thick, blue!70!black, domain=-1.3:-0.0, samples=100]
    plot ({\x},{0.35 + 0.4*exp(4*\x)});

  \draw[thick, blue!70!black, domain=0.12:1.3, samples=80]
    plot ({\x},{0.25/(\x)});
\end{tikzpicture}